\documentclass[journal]{IEEEtran}
\usepackage{graphicx}
\usepackage{amsmath}
\usepackage[noend]{algpseudocode}
\usepackage{algorithmicx,algorithm}
\usepackage{amsthm}
\usepackage{amsfonts}
\usepackage{subfigure} 
\usepackage{color}
\usepackage{cite}
\usepackage{setspace}
\newtheorem{Lemma}{Lemma}
\newtheorem{Theorem}{Theorem}
\usepackage{amssymb}
\usepackage{color}
\graphicspath{{images/}}
\usepackage{booktabs}
\usepackage{multirow} 
\usepackage{makecell}
 \usepackage[numbers,sort&compress]{natbib}
 \newtheorem{remark}{Remark}
\usepackage{stfloats}
\usepackage{amsmath}
\usepackage{amssymb}
\usepackage{enumitem}
\usepackage{threeparttable}
\usepackage{comment}
\usepackage{bm}
\begin{document}	
\title{Proper Sea Surface Roughness Enhances the Performance of Near-Shore Maritime Networks}
	
\author{Wen-Yu Dong, Shaoshi Yang,~\IEEEmembership{Senior Member,~IEEE}, Song Zhao, Jinyang Yu, \\ Weiliang Xie, Rui-Si Han, Qi Bi,~\IEEEmembership{Fellow,~IEEE}, and Sheng Chen,~\IEEEmembership{Life Fellow,~IEEE}	%
\thanks{W.-Y. Dong, S. Zhao, J. Yu, and Q. Bi are with the Future Technology Research Center, China Telecom Research Institute, Beijing 102209, China (E-mails: dongwy@chinatelecom.cn; zhaosong1@chinatelecom.cn; yujy10@chinatelecom.cn;  qibi@chinatelecom.cn)}
\thanks{S. Yang is with the School of Information and Communication Engineering, Beijing University of Posts and Telecommunications, and also with the Key Laboratory of Mathematics and Information Networks, Ministry of Education, Beijing 100876, China (E-mail:  shaoshi.yang@bupt.edu.cn).}
\thanks{W. Xie is with the Mobile and Terminal Technology Research Department, China Telecom Research Institute, Beijing 102209, China (E-mail: xiewl@chinatelecom.cn)}
\thanks{R.-S. Han is with the Cloud Network Operating System R\&D Center, China Telecom, Beijing 102209, China (E-mail: hanruisi@chinatelecom.cn)}
\thanks{S. Chen is with the School of Electronics and Computer Science, University of Southampton, Southampton SO17 1BJ, U.K. (E-mail: sqc@ecs.soton.ac.uk).} %
\vspace*{-5mm}
}

	% The paper headers
%	\markboth{Journal of \LaTeX\ Class Files,~Vol.~14, No.~8, August~2021}%
%	{Shell \MakeLowercase{\textit{et al.}}: Stochastic Geometry Based Modeling and Analysis of Cooperative Satellite-Aerial-Terrestrial Systems for Nomadic Communications with Weak Signal Coverage}
	
	%	\IEEEpubid{0000--0000/00\$00.00~\copyright~2021 IEEE}
	% Remember, if you use this you must call \IEEEpubidadjcol in the second
	% column for its text to clear the IEEEp  ubid mark.
	
\maketitle 

\begin{abstract}
	Accurate performance analysis for near-shore maritime wireless communication is essential for ensuring robust and reliable operations. However, existing analytical models often rely on oversimplified propagation assumptions, such as a perfectly smooth sea surface, which fail to capture the full dynamics of the maritime channel. In this paper, we develop a physically grounded analytical framework using stochastic geometry that bridges this gap. The spatial distribution of vessels is modeled as a non-homogeneous Poisson point process to reflect realistic near-port densities. We replace the idealized smooth-sea assumption by deriving a novel reflection coefficient from the classical Rayleigh criterion, which explicitly links the path loss to the significant wave height. Integrating this roughness-aware channel model into the stochastic geometry framework, we derive new analytical expressions for the uplink coverage probability and average ergodic rate, providing the first tractable characterization of aggregate interference under such dynamic conditions.
The analysis reveals a sea-state-dependent reliability--capacity trade-off: roughness-induced attenuation of the coherent specular reflection can suppress destructive-interference nulls and improve reliability-oriented coverage, while reducing high-SINR and average-rate performance. Available measurements support the underlying roughness-sensitive reflection mechanism, but direct VHF validation under rough sea conditions remains unavailable; the corresponding rough-sea results are therefore interpreted as model-based predictions. A cross-frequency ablation further confirms the wavelength dependence of the roughness effect and shows that the reflection coefficient must be evaluated for the operating frequency.
\end{abstract}
	
\begin{IEEEkeywords}
 Maritime communication,  stochastic geometry, sea surface roughness, path loss model, coverage probability
\end{IEEEkeywords}

\section{Introduction}\label{S1}

\IEEEPARstart{T}{he} escalating demand for maritime autonomy, intelligent transportation, and robust safety services has underscored the critical importance of high-performance wireless communication in near-shore environments \cite{Alqurashi2023, Lyu2021, Ding2026ISAC}. Since the reliability of these data-centric applications hinges on a predictable and resilient physical layer, a thorough understanding and accurate modeling of the maritime propagation channel is an indispensable foundation for the effective analysis, design, and optimization of these vital communication systems \cite{Lin2020}.

Despite its importance, the near-shore maritime channel presents a formidable modeling challenge. Its character is dominated by complex multipath propagation over the sea surface \cite{WangAceess}, yet a pivotal deficiency in existing analytical frameworks is the oversimplified treatment of this surface. The prevalent assumption of a perfectly smooth, mirror-like (i.e., specular) surface is physically unrealistic. This idealization fails to capture the profound impact of sea state on reflection characteristics, leading to significant discrepancies between theoretical predictions and real-world channel behavior, particularly in the characterization of multipath fading depths. Addressing this physical-layer inaccuracy is the first critical step, but it is insufficient on its own. Modern maritime systems are not single links but complex networks of randomly located transceivers. Therefore, a comprehensive analysis demands a framework that can handle this network-scale randomness.

To meet this dual challenge, our work proposes a synergistic approach that integrates two powerful technical components. At the physical link level, we move beyond simplistic assumptions by incorporating a more realistic, physics-based propagation model. We build upon foundational work like the piecewise path loss model \cite{Lee20214} but enhance it to account for sea surface roughness. At the network level, we employ stochastic geometry \cite{1995Stochastic}, a potent and tractable methodology for analyzing large-scale wireless networks with randomly located nodes. The technical focus of this work is the integration of wavelength-dependent coherent-reflection attenuation with a non-homogeneous maritime network model. This integration makes it possible to trace how link-level multipath changes affect aggregate interference, coverage probability, and average rate.
The framework is intended for analytical evaluation under its stated propagation and access assumptions rather than as a complete protocol-level or measurement-calibrated representation of every operational maritime scenario.

\subsection{Related Works}\label{S1.1}

The maritime propagation environment presents a formidable modeling challenge, governed by the complex interplay between multipath reflections from the time-varying sea surface and anomalous atmospheric phenomena like evaporation ducts. To address this, the literature has evolved along two primary trajectories: high-fidelity physical simulations and tractable analytical models. The first trajectory prioritizes physical fidelity through computationally intensive methods. Notable examples include Ding et al. \cite{Ding2019}, who employed stochastic ray tracing to model diffuse sea surface reflections; Raulefs et al. \cite{Raulefs2023}, who introduced a geometric stochastic channel model for specific links incorporating terrestrial scatterers; and He et al. \cite{He2022}, who developed a non-stationary GSCM to jointly capture sea scattering and atmospheric ducting. While these simulation-centric frameworks provide exceptional link-level detail, they share a fundamental limitation: their inherent complexity precludes the derivation of a closed-form path loss expression, rendering them unsuitable for the tractable performance analysis of large-scale networks.

The second trajectory of research, which seeks analytical tractability, has produced several environment-aware propagation and coverage models \cite{Li2025LowAltitude}, though often with significant limitations in their physical grounding and applicability. For instance, Mehrnia and Ozdemir \cite{Mehrnia2016} proposed an empirically corrected two-ray model for millimeter-wave frequencies to fit ray-tracing results. However, this model is fundamentally limited as it neglects key physical phenomena like atmospheric ducting and sea surface roughness, and its correction factor remains purely empirical and lacks validation against measurement data. Similarly, Matolak and Sun \cite{Matolak2017} developed a measurement-based random model for air-to-ground channels that incorporates an intermittent third path. But this model is not directly applicable to our near-shore scenario and it also neglects the crucial effect of atmospheric ducting. The work by Lee et al. \cite{Lee20214} provided the seminal, measurement-validated piecewise model that transitions from a two-ray to a three-ray model to account for atmospheric ducting at longer ranges. Despite its significance, this model was validated under calm sea conditions and is based on ideal specular reflection, meaning it does not offer a physical parameterization of the sea state's impact on channel characteristics.

To leverage such path loss models for the system-level analysis of large-scale networks, where performance is fundamentally limited by interference arising from the random spatial locations of nodes \cite{Haenggi2009}, stochastic geometry has emerged as an indispensable tool. Seminal works \cite{Andrews2011, Dhillon2012} demonstrated that by modeling base station locations as a Poisson point process (PPP), it is possible to derive tractable and accurate expressions for key performance metrics such as coverage probability and average rate, even in complex multi-tier heterogeneous cellular networks. This methodology has since become a mainstream pillar of wireless research, proving its versatility in analyzing increasingly complex scenarios \cite{Zhao2019}, including cooperative non-terrestrial networks composed of satellites and unmanned aerial vehicles \cite{DongTCOM, DongJSAC}. More recently, fluid-spatiotemporal stochastic geometry has been introduced to characterize information flow in non-stationary spatial fields \cite{Dong2026FSTSG}. Building on this continuum perspective, related fluid-based formulations have been developed for flux-aware infrastructure provisioning in UAV logistics networks \cite{Dong2026TMC} and for vorticity-dissipation-based loop-free routing in ultra-dense networks \cite{Dong2026VDR}. These developments further extend spatial network modeling from static node distributions toward the dynamics and transport of network flows. Nevertheless, the unique combination of propagation conditions and non-uniform spatial distributions encountered in near-shore maritime networks remains insufficiently explored.

The choice of a large-scale path loss model is critical for accurately analyzing near-shore communication networks, and yet many system-level analyses adopt simplified models that lack physical fidelity.  A principal limitation of these works is their focus on single-link,  noise-limited performance, which overlooks the dominant role of co-channel interference in a shared-spectrum system. For instance, the seminal work on space-air-ground-sea integrated networks by Xu et al. \cite{Xu2023} employs the empirical model recommended by the International Telecommunication Union. While useful for high-level prediction, this black-box model is derived from curve-fitting and offers little insight into the underlying propagation mechanisms. Similarly, recent studies leveraging stochastic geometry for shore-to-ship analysis, such as the LEO satellite-aided communication framework \cite{Hu2024}, model the direct marine link with a standard power-law path loss. This simplification, while mathematically convenient, fails to capture the unique propagation characteristics of the near-shore maritime environment, which are dominated by sea-surface reflections and atmospheric ducting, particularly at low antenna heights.

Beyond the fidelity of the path loss model, a second critical disconnect between theory and practice lies in the spatial modeling of maritime traffic.  The few shore-to-ship analyses leveraging stochastic geometry often adopt overly simplistic spatial models, such as analyzing a single representative link or assuming a uniform vessel distribution \cite{Xu2023, Hu2024}. These assumptions, while tractable, fail to capture the large-scale, non-uniform nature of maritime traffic. In parallel, research in other wireless networking domains has addressed non-homogeneity using structured point processes, such as Poisson cluster processes \cite{Chun2015, DongGC, Cho2013}. However, these models are ill-suited for the arbitrary and unstructured patterns characteristic of near-shore vessel distributions. This analytical gap, compounded by the prevalent use of simplified path loss models that lack physical fidelity, underscores the absence of a comprehensive framework for accurately analyzing large-scale maritime communication systems.

\subsection{Our Contributions}\label{S1.2}
This paper develops a roughness-aware stochastic-geometry framework for large-scale near-shore maritime networks. Its focus is the cross-layer integration of wavelength-dependent coherent-reflection attenuation, a non-homogeneous vessel distribution, and aggregate uplink interference. The resulting framework connects maritime multipath propagation and spatial traffic non-uniformity to network-level coverage and rate. The measurement comparisons are interpreted within their respective frequency and sea-state scopes and are not used to claim complete empirical validation across all operating conditions. The main contributions are summarized as follows.
\begin{itemize}
	\item We establish a cross-layer analytical framework that combines a non-homogeneous Poisson point process for spatially varying vessel traffic with a roughness-aware piecewise two-ray/three-ray propagation model. Unlike analyses based on a representative link, a uniform vessel field, or a fixed power-law path-loss model, the proposed framework carries both spatial non-homogeneity and maritime multipath characteristics into system-level performance evaluation.
	\item Drawing on the classical Rayleigh roughness criterion, we formulate a wavelength-dependent effective reflection coefficient for the piecewise maritime path-loss model. The coefficient depends explicitly on the sea-state parameter, grazing geometry, and carrier wavelength, while recovering the classical smooth-sea model as the electrically smooth limiting case. This formulation enables the coherent specular reflection component to be evaluated independently at each operating frequency rather than transferred between C-band and VHF.
	\item We derive tractable expressions for the aggregate-interference Laplace transform, uplink coverage probability, and average ergodic rate. The analysis jointly incorporates the non-homogeneous vessel distribution, piecewise maritime path loss, Nakagami fading, and co-channel interference, thereby propagating link-level changes in coherent reflection to network-level reliability and rate.
	\item We evaluate the framework using complementary propagation and network-level evidence. The 5.15~GHz measurements assess roughness-sensitive attenuation of the coherent specular reflection component, whereas the 160~MHz measurements examine the electrically smooth VHF limit under calm-sea conditions. The 162~MHz rough-sea curves are presented as wavelength-specific model predictions. A cross-frequency coefficient-transfer ablation further examines the impact of substituting the C-band coherent-reflection coefficient for the wavelength-specific VHF coefficient while retaining all other VHF propagation parameters. The network-level analysis characterizes the conditional reliability--capacity trade-off associated with coherent-reflection attenuation under the considered vessel distribution and propagation geometry.
\end{itemize}

The remainder of this paper is organized as follows. Section~\ref{S2} details the system model, including the vessel distribution and channel characteristics. Section~\ref{S3} presents the refined roughness-aware path-loss model. Sections~\ref{S4} and~\ref{S5} derive the uplink coverage probability and average ergodic rate. Section~\ref{S6} presents numerical results, measurement comparisons within their stated frequency and sea-state scopes, and the cross-frequency transfer ablation. Section~\ref{S7} concludes the paper.

\textbf{Notation:} $\mathbb{P}(\cdot)$ denotes the probability measure and $\mathbb{E}[\cdot]$ denotes the expectation operator. The Laplace transform of random variable $X$ is defined as $\mathcal{L}_X\left(s\right)
=\mathbb{E}\,\left[\exp (-s X)\right]$. The cumulative distribution function (CDF) and probability density function (PDF) of $X$ are denoted by $F_X(x)$ and $f_X(x)$, respectively, while the conditional PDF of $X$ conditioned on $Y$ is denoted as $f_{X|Y}(x|y)$. $\Gamma(\cdot)$ is the Gamma function, and $(\cdot)_{n}$ is the Pochhammer symbol, which is defined as $(x)_{n}=\Gamma(x+n)/\Gamma(x)$.  The 2-norm of $\bm{x}=[x_1,x_2,\ldots,x_n]^{\rm T}$ is defined as $\left\|\bm{x}\right\|_2=\sqrt{x_1^2+\cdots+x_n^2}$. $\binom n k$ denotes the binomial coefficient. 

\section{System Model}\label{S2}

As illustrated in Fig.~\ref{fig:1}, the system under consideration is a near-shore maritime wireless network, composed of a single onshore station (OS) located at the origin and a population of randomly distributed vessels. Our performance analysis focuses on the uplink communication from an arbitrary vessel to the OS. This scenario is fundamental for supporting critical maritime services, including safety applications, navigation, and data exchange.

\subsection{System Geometry and Vessel Deployment}\label{S2.1}

To capture the concentration of maritime traffic near coastlines, the spatial locations of all vessels are modeled by a NHPPP, $\Phi_{\text{S}}$, on a two-dimensional half-plane. The process intensity $\mu(d)$ is assumed to be isotropic, depending only on the horizontal distance $d$ from the OS, and is given by $\mu(d)=\mu_0f(d)$, where $\mu_0$ is the baseline intensity constant representing the average number of vessels per unit area without the spatial modulation by $f(d)$. In this work, the spatial profile $f(d)$ is modeled using the kernel of a Gamma distribution. This modeling choice is supported by statistical analyses of the Automatic Identification System (AIS) data \cite{Braca2016}, demonstrating that maritime traffic is rarely uniform but typically exhibits a unimodal spatial profile. Specifically, vessels tend to concentrate in designated shipping lanes or anchorage zones located at a specific distance from the shore, leading to a lower vessel density in coastal proximity and a peak density at the lane center. The Gamma distribution (with shape parameter $k > 1$) naturally captures this physical characteristic, as it allows the probability density to start at zero, rise to a peak at a non-zero distance, and then decay, offering a more realistic representation than uniform or exponential models. Finally, the performance is evaluated for a randomly selected, typical vessel.

\subsection{Antenna and Channel Characteristics}\label{S2.2}

Both the OS and the vessels are assumed to be equipped with omnidirectional, vertically polarized antennas. This is a standard assumption for near-shore communications, as an omnidirectional pattern ensures connectivity regardless of a vessel's orientation, while vertical polarization optimizes performance in the presence of sea-surface reflections.

\begin{figure}[!t]
	%\vspace*{-1mm}
	\begin{center}
		\includegraphics[width=1\columnwidth]{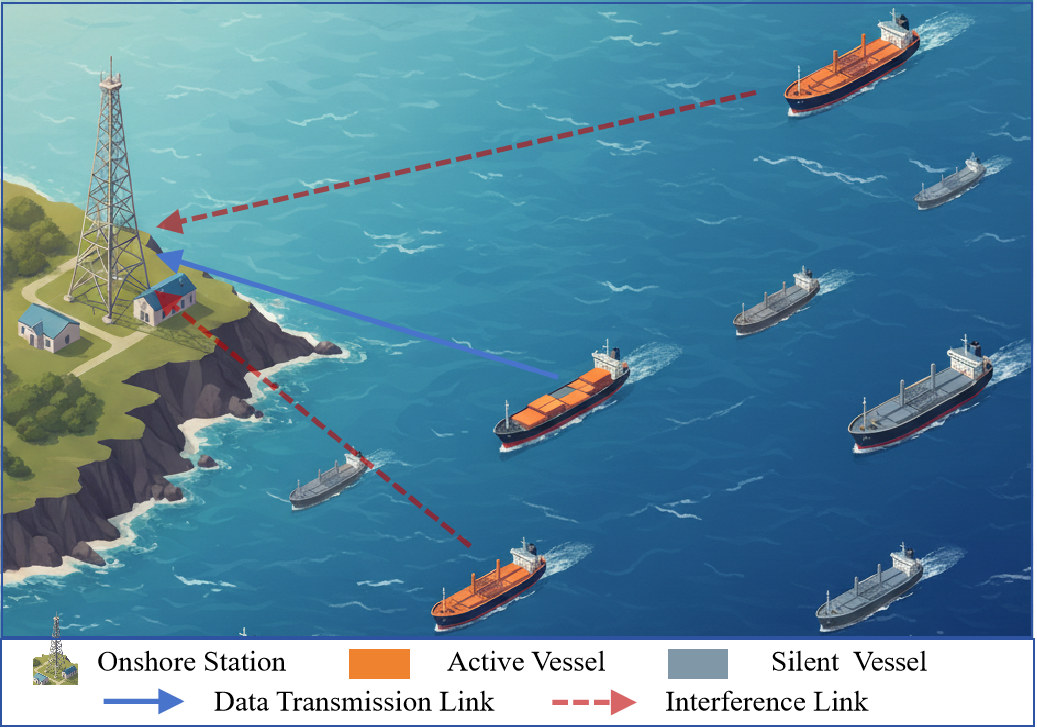}
	\end{center}
	\vspace*{-4mm}
	\caption{Illustration of the near-shore maritime wireless communication system.}
	\label{fig:1} % Fig.1
	\vspace*{-4mm}
\end{figure}

For the small-scale channel gain, $h$, we adopt the Nakagami-$m$ fading model. This model offers a strong balance between mathematical tractability and the flexibility to accurately represent the line-of-sight (LOS)-dominant conditions of the maritime channel, where scattering from the sea surface induces fading\footnote{For instance, with the Nakagami fading parameter set to $m=1$, it reduces to the Rayleigh distribution \cite{IoTJDong}. For the LOS-dominant maritime channel, it can be closely matched to the Rician-$K$ distribution by setting  $m=\frac{(K+1)^2}{2K+1}$. Furthermore, by tuning the parameter $m$, it is possible to model fading conditions from severe (e.g., rough sea) to moderate (e.g., calm sea) by fitting the distribution to empirical data.}. The channel gain power, $|h|^2$, thus follows a normalized Gamma distribution. 
	Beyond its statistical distribution, characterizing the temporal evolution of the fading channel is equally important for actual data transmission. We adopt a quasi-static assumption, where channel parameters are modeled as constant over a packet transmission interval. This assumption is substantiated by an analysis of the Doppler spread induced by wave motion. For the very high frequency (VHF) system under consideration, let $f_c = 162$ MHz denote the carrier frequency and $\lambda \approx 1.85$ m be the corresponding wavelength. Even under rough sea conditions, characterized by a significant wave height (SWH) $H_{\text{s}} > 2$ m, the orbital velocity of surface wave particles, denoted by $v_o$, remains moderate at approximately $1 \sim 2$ m/s \cite{WangAceess}. This motion induces a maximum Doppler shift, $f_d = v_o/\lambda$, of approximately $1$ Hz. The resulting channel coherence time, estimated as $T_c \approx 9/(16\pi f_d) \approx 180$ ms, significantly exceeds the $26.6$ ms transmission duration of a standard Automatic Identification System (AIS) packet. Consequently, the maritime channel can be accurately modeled as block-fading, a premise further supported by measurement campaigns reporting narrow Doppler spreads in such environments  \cite{Yang2010}.

\subsection{Baseline Medium-Access and Interference Model}\label{S2.3}

Let $p_a\in(0,1]$ denote the effective activity factor of a vessel on the considered time-frequency resource. To obtain a tractable physical-layer interference baseline, the parent vessel process $\Phi_{\mathrm{S}}$ is independently thinned with probability $p_a$. The resulting process $\Phi_{\mathrm{S}}'$ preserves the first-order active-transmitter intensity but does not reproduce the slot selection, reservation, sensing, and spatio-temporal dependence introduced by operational AIS access schemes. According to the thinning theorem \cite{1995Stochastic}, the intensity of the active-transmitter process is
\begin{equation}\label{eq2-1}
	\mu'(d)=p_a\mu(d)=p_a\mu_0f(d).
\end{equation}

\begin{remark}
	Operational AIS uses TDMA-based access schemes, including self-organizing TDMA and carrier-sense TDMA. SOTDMA introduces slot selection and reservation, whereas CSTDMA relies on channel sensing; these mechanisms generate spatio-temporal dependence among simultaneous transmitters \cite{ITURM1371}. Independent thinning captures only the marginal active-transmitter intensity $p_a\mu(d)$ and omits these protocol-induced correlations. The resulting coverage and rate expressions are therefore interpreted as conservative physical-layer benchmarks under an uncoordinated activity model, rather than as protocol-level predictions for operational AIS networks. Here, ``conservative'' means that the coordination gains provided by operational access schemes are not represented; no strict lower-bound property is claimed for every AIS deployment. A protocol-faithful extension would require a dependent space-time access process.
\end{remark}

\subsection{Large-Scale Path Loss Model}\label{S2.4}

For the near-shore propagation channel, we adopt the location-dependent, piecewise path loss framework proposed in~\cite{Lee20214}, as illustrated in Fig. \ref{fig:1.1}. This approach combines the accuracy of the two-ray model at shorter distances with the three-ray model's ability to account for atmospheric ducting at longer ranges. The transition between the models occurs at a breakpoint distance, $d_{\text{break}}$, defined as:
\begin{equation}\label{eq2-2} % eq.2
	d_{\text{break}} = \frac{4 h_{\text{t}} h_{\text{r}}}{\lambda},
\end{equation}
where $h_{\text{t}}$ and $h_{\text{r}}$ are the transmitter and receiver antenna heights above sea level, respectively, and $\lambda$ is the carrier wavelength. This framework is based on a simplified flat-earth geometry, assuming a perfectly smooth sea surface with a reflection coefficient of $\Gamma \approx -1$ and far-field conditions defined by $d \gg h_{\text{t}}, h_{\text{r}}$.

\subsubsection{Two-Ray Model ($d < d_{\rm{break}}$)}
At distances shorter than the breakpoint, the coherent sum of the direct and sea-reflected paths is dominant. The path loss in this region is described by the ideal two-ray model, given in a linear scale as:
\begin{equation}\label{eq2-3} % eq.3
	PL_{\text{2-ray}}^{\text{ideal}}(d) = \frac{(4\pi d / \lambda)^2}{\left(2\sin\left(\frac{2\pi h_{\text{t}} h_{\text{r}}}{\lambda d}\right)\right)^2}.
\end{equation}

\begin{figure}[!t]
	\vspace*{-1mm}
	\begin{center}
		\includegraphics[width=0.98\columnwidth]{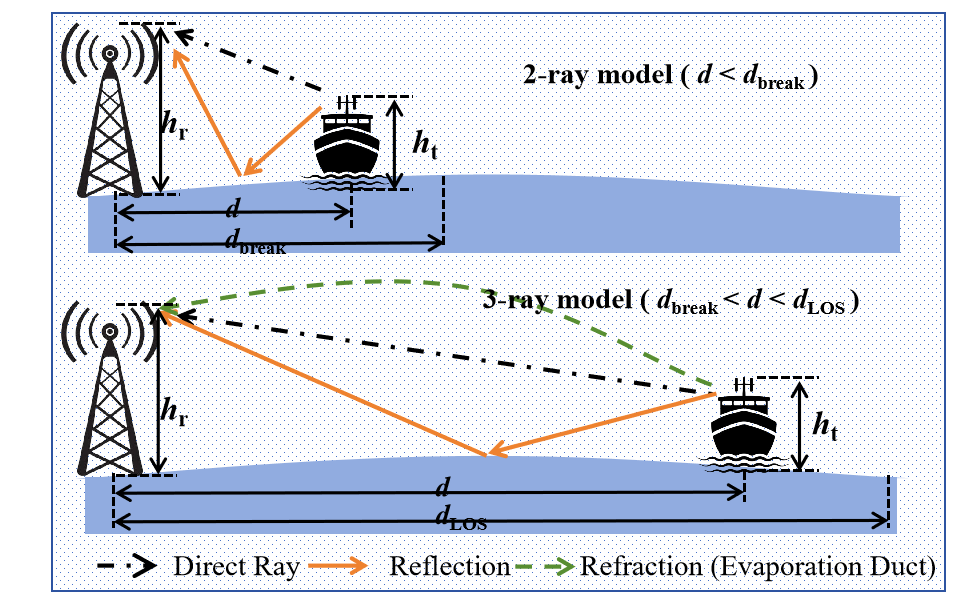}
	\end{center}
	\vspace*{-5mm}
	\caption{Illustration of the piecewise path loss model.}
	\label{fig:1.1} % Fig.2
	\vspace*{-5mm}
\end{figure}

\subsubsection{Three-Ray Model ($d_{\rm{break}} \le d < d_{\rm{LOS}}$)}
Beyond $d_{\text{break}}$, a third path component resulting from reflection and refraction within the atmospheric evaporation duct becomes significant. The path loss is then described by an ideal three-ray model up to the LOS distance, $d_{\text{LOS}}$. The linear path loss is given by:
\begin{equation}\label{eq2-4} % eq.4
	PL_{\text{3-ray}}^{\text{ideal}}(d) = \frac{(4\pi d / \lambda)^2}{\left(2\left(1+\Delta(d)\right)\right)^2},
\end{equation}
where the interference term $\Delta(d)$ is given by:
\begin{equation}\label{eq2-5} % eq.5
	\Delta(d) = 2 \sin\left(\frac{2\pi h_{\text{t}} h_{\text{r}}}{\lambda d}\right) \sin\left(\frac{2\pi (h_{\text{e}} - h_{\text{t}})(h_{\text{e}} - h_{\text{r}})}{\lambda d}\right).
\end{equation}
Here, $h_{\text{e}}$ is the effective height of the evaporation duct, and $d_{\text{LOS}} \approx \sqrt{2R_{\text{e}}h_{\text{t}}} + \sqrt{2R_{\text{e}}h_{\text{r}}}$, with $R_{\text{e}}$ being the effective Earth's radius.

\subsection{SINR Model}\label{S2.5}

Our analysis focuses on the uplink, where a vessel $S_m$ at distance $D_{S_m}$ transmits to the OS. This link is subject to co-channel interference from the set of other concurrently transmitting vessels, $\Phi_{\text{S}}'$. The performance is therefore determined by the signal-to-interference-plus-noise ratio (SINR) at the OS, formulated as:
\begin{align} % eq.6
	\mathrm{SINR}_{S_m} &= \frac{P_{S_m} |h_{S_m}|^2 G(D_{S_m})}{I_{S_m} + N_0} \label{eqSINR0}, 
\end{align}
where
\begin{align} % eq.7
 \quad I_{S_m} &= \sum_{S_i \in \Phi_{\text{S}}'} P_{S_i} |h_{i}|^2 G(D_i) \label{eqInterf}.
\end{align}
In the above, $P_{S_m}$ and $P_{S_i}$ are the transmit powers of the desired vessel $S_m$ and an interfering vessel $S_i$, respectively, with $|h_{S_m}|^2$ and $|h_i|^2$ being their respective small-scale fading power gains, while $N_0$ is the additive white Gaussian noise (AWGN) power, and the term $G(d) = 1/PL(d)$ represents the large-scale path gain at distance $d$, where $PL(d)$ is the piecewise path loss model defined by \eqref{eq2-3} or \eqref{eq2-4}.

\section{Refined Large-Scale Path Loss Model with Sea Surface Roughness}\label{S3}

Existing maritime path loss frameworks are predicated on the idealization of a perfectly smooth sea surface, leading to an ideal reflection coefficient of $\Gamma \approx -1$. This simplification is a primary source of deviation between theoretical predictions and measured data, as it fails to capture the significant impact of sea-surface roughness. To address this limitation, we introduce a more physically grounded path loss model where signal propagation is a direct function of the sea state.

\subsection{Quantifying Sea Roughness}\label{S3.1}

The key to extending the model's applicability beyond the calm or near-calm sea states considered in previous validation studies is to quantitatively incorporate surface roughness. We achieve this by using the significant wave height, denoted as $H_{\text{s}}$, a standard oceanographic parameter defined as the average height of the highest one-third of waves. 
The selection of $H_{\text{s}}$ is strategic, as it directly links our physical reflection model to the standardized World Meteorological Organization's sea state codes \cite{MoC} used in global maritime and meteorological reporting. This linkage ensures that our model's key input is readily available from numerous practical sources, including real-time buoy measurements, historical oceanographic databases, and standard meteorological forecasts.

{\color{black}
	By incorporating $H_{\text{s}}$, the proposed formulation provides a parametric means of examining how the attenuation of the coherent specular reflection component changes with sea state.
	However, this capability should not be interpreted as empirical validation across the full spectrum of operational sea states.
	In particular, the VHF measurements considered in this work correspond to calm-sea conditions, while the VHF behavior for $H_{\text{s}}>1$\,m is evaluated through model-based sensitivity analysis.
	Direct VHF measurement validation under rough sea conditions remains outside the empirical scope of the present study.
}

\subsection{Roughness-Dependent Reflection Coefficient}\label{S3.2}

The ideal reflection coefficient, $\Gamma \approx -1$, is a simplification valid only for a perfectly smooth, mirror-like sea surface. In reality, a rough sea surface fundamentally alters the reflection mechanism. The incident energy is partitioned into two distinct components: a specular component, which reflects coherently in a single, predictable direction as if from a mirror, and a diffuse component, which scatters non-coherently in many directions.
Our path loss model is predicated on the interference between coherent wave components (direct, reflected, and refracted paths). Therefore, the deterministic propagation model focuses on the coherent direct, reflected, and refracted components that produce the predictable large-scale interference structure. The unresolved diffuse field is represented only approximately through statistical small-scale fading and is not explicitly characterized in terms of its power, angular spectrum, delay profile, or spatial correlation.
The effective reflection coefficient, $\rho_{\text{s}}$, is defined to represent the amplitude and phase of this specular component. 

We employ the classical Rayleigh criterion for roughness \cite{Maradudin1976}. This criterion quantifies the degree of surface roughness as perceived by an incident wave, based on the phase variations it induces across the reflecting surface. These phase variations are captured by the dimensionless Rayleigh roughness factor $\xi$, defined as \cite{Pinel2010}:
\begin{equation}\label{eqVII-1} % eq.8
	\xi = \frac{4\pi H_{\text{s}} \sin\psi}{\lambda} ,
\end{equation}
where $H_{\text{s}}$ is the significant wave height, and $\psi$ is the grazing angle,  which for long-range links can be approximated as $\psi \approx (h_{\text{t}} + h_{\text{r}}) / d$. The factor $\xi$ represents the standard deviation of the phase difference between rays reflected from different heights on the rough surface.
	
The attenuation of the specular reflection component originates from random height variations on the rough surface, which introduce random phase shifts into the reflected wave. To mathematically model this physical process, we adopt the widely accepted assumption that the surface height variations follow a Gaussian distribution. The coherent specular field is the statistical average of the total reflected field over all random surface heights. This averaging is mathematically equivalent to finding the expected value of the random phase term, $\exp (j \delta \phi_r)$. In this term, $j$ is the imaginary unit, and $\delta \phi_r$ denotes the phase  variation of the reflected  wave. For a Gaussian process, this expectation has a standard solution which yields a negative exponential function.
This attenuation, resulting from the averaging process, is captured by the specular scattering factor $\rho_{\textrm{spec}}$, which acts as a reduction coefficient applied to the ideal reflection coefficient \cite{Pinel2010}:
\begin{equation} % eq.9
	\rho_{\text{spec}} = \exp\left(-\frac{\xi^2}{2}\right). \label{eq:rho_spec}
\end{equation}
This formulation correctly models the physical behavior:
\begin{itemize}
	\item When the sea is smooth, i.e., $H_{\mathrm{s}}\to0$, then $\xi\to0$ and $\rho_{\mathrm{spec}}\to1$, so that the coherent specular reflection component approaches the smooth-sea limit.
	\item As $H_{\mathrm{s}}$ increases, $\xi$ increases and $\rho_{\mathrm{spec}}$ decreases, indicating attenuation of the ensemble-averaged coherent specular reflection component. The present formulation does not explicitly determine the power or angular distribution of the corresponding diffuse field.
\end{itemize}

\textit{Electrical Roughness, Frequency Scaling, and Validation Scope:} The Rayleigh roughness parameter $\xi$ in \eqref{eqVII-1} contains the carrier wavelength $\lambda$ in the denominator. Hence, the same physical sea state $H_{\text{s}}$ produces different electrical roughness at different carrier frequencies. A shorter wavelength produces stronger attenuation of the coherent specular reflection component, whereas a longer wavelength produces weaker attenuation. In the electrically smooth limit, $\xi\to0$, the effective reflection coefficient approaches the ideal smooth-sea value $\rho_{\text{s}}\to-1$, and Lee's ideal model is recovered. Accordingly, the roughness factor is evaluated independently using the wavelength of each operating band; no factor calculated at 5.15~GHz is used in the 162~MHz propagation or network calculations. The 5.15~GHz measurements assess the roughness-sensitive attenuation mechanism at C-band but cannot substitute for direct rough-sea validation at VHF. Conversely, the 160~MHz calm-sea measurements examine the electrically smooth limiting behavior near the AIS frequency but do not validate the predicted VHF behavior for $H_{\text{s}}>1$~m.

\textit{Applicability under Severe Sea States:} The Rayleigh factor adopted in this work is a coherent-field reduction factor. It characterizes attenuation of the ensemble-averaged specular component but does not determine the power, angular distribution, delay structure, or spatial correlation of the diffuse field generated by a severely rough sea surface. In particular, the limit $\rho_{\mathrm{spec}}\to0$ indicates suppression of the coherent specular reflection component; it does not imply that the scattered energy vanishes or that the remaining field is fully described by the deterministic two-ray/three-ray model. The Nakagami-$m$ term used in the network analysis represents unresolved small-scale fluctuations phenomenologically. Because its parameter is not derived from a sea-state-dependent diffuse-scattering model, it should not be interpreted as a physical substitute for the missing diffuse field. Accordingly, the quantitative sea-state interpretation of the present model is limited to $H_{\mathrm{s}}\leq3$~m. Results for $H_{\mathrm{s}}>3$~m are retained only as parametric stress tests of coherent-reflection suppression and are not used as operational severe-sea propagation or network predictions. The value $H_{\mathrm{s}}=3$~m is adopted as a conservative reporting boundary in this work rather than as a universal electromagnetic transition, which also depends on wavelength, grazing geometry, and the sea-surface spectrum.

The modeling of the effective reflection from a rough surface is decomposed into two distinct physical effects. The first is the intrinsic electromagnetic interaction at the air-sea interface, described by the ideal, smooth-surface reflection coefficient  $\Gamma$. The second effect is the attenuation of this ideal field by the surface's geometric roughness. This roughness causes a fraction of the energy to be incoherently scattered away from the specular direction, an effect quantified by the dimensionless specular scattering factor $\rho_{\text{spec}}$, which represents the fraction of the field that remains coherent. The overall effective reflection coefficient, $\rho_{\text{s}}$, is the result of these two sequential effects, and can be modeled by their product. 
Thus, the ideal reflection coefficient is replaced with an effective reflection coefficient $\rho_{\text{s}}$, which depends on $H_{\text{s}}$ and is derived from the classical Rayleigh criterion for roughness \cite{Davies1954}:
\begin{equation}\label{eqERC} % eq.10
	\rho_{\text{s}} = \Gamma \cdot \rho_{\text{spec}} \approx -1 \cdot \exp\left(-\frac{\xi^2}{2}\right).
\end{equation}

\noindent\textit{Limitation under Severe Sea States:} The Rayleigh roughness factor adopted in this work characterizes attenuation of the ensemble-averaged coherent specular reflection component. It does not explicitly determine the power, angular distribution, delay structure, or spatial correlation of the diffuse field generated by a severely rough sea surface. In particular, a reduction in $\rho_{\mathrm{spec}}$ represents progressive suppression of the coherent specular reflection component but does not provide a complete description of the remaining scattered field. The Nakagami-$m$ fading term represents unresolved small-scale fluctuations statistically; however, because its parameter is not derived from a sea-state-dependent diffuse-scattering model, this treatment constitutes a coarse approximation rather than an explicit physical representation of diffuse scattering. This limitation becomes increasingly important under severe sea states, particularly for $H_{\mathrm{s}}>3$~m, where diffuse scattering may dominate. Results in this regime should therefore be interpreted as sensitivity predictions under the adopted coherent-reflection abstraction rather than as complete quantitative descriptions of severe-sea propagation.

\subsection{Modified Path Loss Equations}\label{S3.3}

The introduction of $\rho_{\text{s}}$ necessitates a modification of the original path loss equations to account for non-ideal reflection.
\begin{Theorem}\label{T1}
Under the assumption of a non-ideal sea surface reflection characterized by the effective reflection coefficient $\rho_{\mathrm{s}}$, the linear path loss, $P\!L^{\rm{rough}}(d)$, for the near-shore maritime channel is given by a piecewise function.
For $d < d_{\mathrm{break}}$ (Two-Ray Region), it is expressed by
\begin{equation} \label{eq3-4} % eq.11
	P\!L_{\rm{2}\textrm{-}\rm{ray}}^{\rm{rough}}(d) = \frac{(4\pi d / \lambda)^2}{G_{\mathrm{mp}}^{\rm{2}\textrm{-}\rm{ray}}},
	\end{equation}
and for $d \geq d_{\mathrm{break}}$ (Three-Ray Region), it is given by
\begin{equation}\label{eq3-5} % eq.12
	P\!L_{\rm{3}\textrm{-}\rm{ray}}^{\mathrm{rough}}(d) = \frac{(4\pi d / \lambda)^2}{G_{\mathrm{mp}}^{\rm{3}\textrm{-}\rm{ray}}},
\end{equation}
where 
\begin{align} % eqs.13,14
	G_{\mathrm{mp}}^{\rm{2}\textrm{-}\rm{ray}} =& 1 + \rho_{\mathrm{s}}^2 - 2|\rho_{\mathrm{s}}|\cos\left(\frac{4\pi h_{\mathrm{t}} h_{\mathrm{r}}}{\lambda d}\right), \label{eqGtwoR} \\
  G_{\mathrm{mp}}^{\rm{3}\textrm{-}\rm{ray}} =& (1 + \rho_{\mathrm{s}} C_2 - C_3)^2 + (\rho_{\mathrm{s}} S_2 - S_3)^2, \label{eqGthreeR}
\end{align}  
with $C_2\! =\! \cos(k\Delta d_2)$, $S_2\! =\! \sin(k\Delta d_2)$, $C_3\! =\! \cos(k\Delta d_3)$,  $S_3\! =\! \sin(k\Delta d_3)$, $\Delta d\! =\! \Delta d_2\! =\! \frac{2 h_{\mathrm{t}} h_{\mathrm{r}}}{d}$ and $\Delta d_3\! =\! \frac{2(h_{\mathrm{e}} - h_{\mathrm{t}})(h_{\mathrm{e}} - h_{\mathrm{r}})}{d}$.
\end{Theorem}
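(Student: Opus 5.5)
The plan is to derive both expressions from a coherent phasor sum of the ray fields at the receiver, using the same flat-earth, far-field geometry that underlies Lee's ideal model in \eqref{eq2-3}--\eqref{eq2-5}. There is one change: the sea-reflected ray is weighted by the effective coefficient $\rho_{\mathrm{s}}$ of \eqref{eqERC} instead of $\Gamma=-1$. In the far field ($d\gg h_{\mathrm t},h_{\mathrm r},h_{\mathrm e}$) all path lengths are approximately $d$. Hence the free-space amplitude $\lambda/(4\pi d)$ factors out of every ray, and only the relative phases $k\Delta d_i$ with $k=2\pi/\lambda$ remain. The received field normalized by the free-space field is then
\begin{equation*}
E/E_0 = 1 + \rho_{\mathrm s} e^{-jk\Delta d_2} + (\text{duct term}),
\end{equation*}
and the path loss is $(4\pi d/\lambda)^2/|E/E_0|^2$. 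This gives the common form \eqref{eq3-4}--\eqref{eq3-5} with $G_{\mathrm{mp}}=|E/E_0|^2$.

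The first step is the path-difference computation by image theory. The reflected path has length $\sqrt{d^2+(h_{\mathrm t}+h_{\mathrm r})^2}$ and the direct path has length $\sqrt{d^2+(h_{\mathrm t}-h_{\mathrm r})^2}$. A first-order Taylor expansion gives $\Delta d_2\approx 2h_{\mathrm t}h_{\mathrm r}/d$. For the two-ray region I expand
\begin{equation*}
|1+\rho_{\mathrm s}e^{-jk\Delta d_2}|^2 = 1+\rho_{\mathrm s}^2+2\rho_{\mathrm s}\cos(k\Delta d_2).
\end{equation*}
Since $\rho_{\mathrm s}=-\rho_{\mathrm{spec}}\le 0$, the last term equals $-2|\rho_{\mathrm s}|\cos(4\pi h_{\mathrm t}h_{\mathrm r}/(\lambda d))$, which is \eqref{eqGtwoR}. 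As a consistency check, setting $\rho_{\mathrm s}=-1$ gives $2-2\cos x=4\sin^2(x/2)$, which recovers \eqref{eq2-3}.

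For the three-ray region I model the duct-refracted ray as an image source reflected at height $h_{\mathrm e}$. Its path difference is $\Delta d_3\approx 2(h_{\mathrm e}-h_{\mathrm t})(h_{\mathrm e}-h_{\mathrm r})/d$, obtained by the same expansion. It enters with an effective coefficient of $-1$, i.e.\ an ideal phase-inverting reflection at the duct boundary, which is not affected by sea roughness. This gives
\begin{equation*}
E/E_0 = 1+\rho_{\mathrm s}e^{-jk\Delta d_2}-e^{-jk\Delta d_3}.
\end{equation*}
Separating real and imaginary parts gives $(1+\rho_{\mathrm s}C_2-C_3)^2+(\rho_{\mathrm s}S_2-S_3)^2$ up to the sign convention of the phase, and the modulus is unaffected by conjugation. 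This yields \eqref{eqGthreeR}.

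The main obstacle is justifying the sign and unit amplitude of the third ray, because it must be consistent with Lee's ideal form \eqref{eq2-4}--\eqref{eq2-5}. I would verify this by setting $\rho_{\mathrm s}=-1$ and checking that $|1-e^{-ja}-e^{-jb}|^2$ reduces to $(2(1+\Delta))^2$, or to its far-field equivalent, via product-to-sum identities. If the reduction holds only under the small-phase approximation, I would state it as the equivalence regime and note that the rough-sea modification changes only the coefficient of the sea-reflected term.
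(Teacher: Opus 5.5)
Your derivation of both gain factors follows the paper's proof. Like the paper, you use a coherent phasor sum with the common far-field amplitude factored out, image-theory path differences $\Delta d_2\approx 2h_{\mathrm t}h_{\mathrm r}/d$ and $\Delta d_3\approx 2(h_{\mathrm e}-h_{\mathrm t})(h_{\mathrm e}-h_{\mathrm r})/d$, and $\rho_{\mathrm s}\le 0$ to turn $2\rho_{\mathrm s}\cos(k\Delta d_2)$ into $-2|\rho_{\mathrm s}|\cos(k\Delta d_2)$. The duct ray enters with coefficient $-1$. Its real part $1+\rho_{\mathrm s}C_2-C_3$ and imaginary part $S_3-\rho_{\mathrm s}S_2$ give exactly $(1+\rho_{\mathrm s}C_2-C_3)^2+(\rho_{\mathrm s}S_2-S_3)^2$, so your ``up to conjugation'' caveat is unnecessary.

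The problem is your last paragraph: the check you call the main obstacle would fail, and the proof does not need it. Put $\rho_{\mathrm s}=-1$, $a=k\Delta d_2$, $b=k\Delta d_3$. Then $|1-e^{-ja}-e^{-jb}|^2=3-2\cos a-2\cos b+2\cos(a-b)$, which behaves like $1+2ab$ for small phases and tends to $1$ as $a,b\to 0$ (large $d$). Lee's $(2(1+\Delta))^2$, with $\Delta=2\sin(a/2)\sin(b/2)$, tends to $4$ instead. So the two forms do not agree even in the small-phase regime, and your fallback does not apply. The unit amplitude and sign of the third ray cannot be obtained by matching \eqref{eq2-4}--\eqref{eq2-5}. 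The paper simply adopts $\Gamma_{\mathrm{duct}}=-1$ as a modeling assumption taken from Lee's model and derives the stated formula from it. It also says explicitly that the resulting $\rho_{\mathrm s}=-1$ three-ray factor is not algebraically identical to Lee's simplified form. Do the same: state the duct coefficient as a hypothesis, and present the comparison with Lee as a remark about a different baseline, not as a step of the proof. The constant in the correct expansion is $3$. The paper's displayed ideal-case reduction writes $2$, which would give $0$ at $a=b=0$ instead of $|1-1-1|^2=1$.
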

\begin{proof}
See Appendix~\ref{ApA}.
\end{proof}

\begin{remark}
Note that $\Delta d$, $\Delta d_{2}$ and $\Delta d_{3}$ denote the geometric path length difference in meters, whereas $\Delta(d)$ in (5) denotes the dimensionless interference term used in the ideal three-ray formulation.
\end{remark}

This refinement transforms the path loss model from a static representation into a dynamic one capable of predicting propagation characteristics under various sea states.

\subsection{Validation Against Classical Models}\label{S3.4}

To validate the proposed generalized path loss model, we demonstrate its reduction to the classical two-ray and three-ray models under the ideal condition of a perfectly smooth sea surface, for which the reflection coefficient is $\rho_{\text{s}} = -1$.

\subsubsection{Two-Ray Model}
Under the ideal condition, our generalized multipath gain factor $G_{\text{mp}}$ reduces to:
\begin{align}\label{eqGtwoR-red} % eq.15
	\left. G_{\text{mp}}^{\textrm{2-ray}} \right|_{\rho_{\text{s}}=-1} = \left(2\sin\left(\frac{2\pi h_{\text{t}} h_{\text{r}}}{\lambda d}\right)\right)^2.
\end{align}
The resulting path loss expression becomes identical to the classical two-ray model in \cite{Lee20214}, confirming our framework is a robust generalization of the established theory.

\subsubsection{Three-Ray Model}
Unlike the simplified, engineering-oriented three-ray model given in \cite{Lee20214}, our model is rigorously derived from the first principles of wave superposition. For the ideal case of $\rho_{\text{s}}\! =\! -1$, our generalized gain factor becomes:
\begin{align}\label{eqGthreeR-red} % eq.16
	& \left. G_{\text{mp}}^{\text{3-ray}} \right|_{\rho_{\text{s}}=-1} =
	\left.(1 + \rho_{\text{s}} C_2 - C_3)^2 + (\rho_{\text{s}} S_2 - S_3)^2\right|_{\rho_{\text{s}}=-1} \nonumber\\
	& = 2\! -\! 2\cos(k\Delta d_2)\! -\! 2\cos(k\Delta d_3)\! +\! 2\cos(k(\Delta d_2\! -\! \Delta d_3)).
\end{align}
While not algebraically identical to the simplified form, this result provides a more physically accurate baseline for the ideal three-ray interference mechanism. The strength of our model lies in its ability to seamlessly extend from this rigorous baseline to realistic, non-ideal environments by varying $\rho_{\text{s}}$.

\section{Distribution of Distances}\label{S4}

The spatial distribution of vessels is modeled as a 2D NHPPP. We assume the process is radially symmetric with respect to the observer at the origin, meaning the intensity function $\mu$ depends only on the Euclidean distance $d = \|\mathbf{x}\|$ from the origin. This intensity is formulated as:
\begin{equation}\label{eqNHPPP-sym} % eq.17
	\mu(d) = \mu_0 f(d),
\end{equation}
where $\mu_0$ is the baseline intensity constant and $f(d)$ is a dimensionless function characterizing the spatial density profile.
For a point process with this radial symmetry, the probability density function (PDF) of the distance $d$ for a typical vessel, denoted $g(d)$, is given by:
\begin{align}\label{eq:g_d_general} % eq.18
	g(d) = \frac{d \cdot f(d)}{\int_{0}^{\infty} \rho f(\rho) \,\mathrm{d}\rho},
\end{align}
where the integral in the denominator is a normalization constant ensuring that $g(d)$ integrates to unity. The corresponding cumulative distribution function (CDF) is then:
\begin{equation}\label{NHPPP-eqCDF} % eq.19
	G(d) = \frac{\int_{0}^{d} \rho f(\rho) \,\mathrm{d}\rho}{\int_{0}^{\infty} \rho f(\rho) \,\mathrm{d} \rho}.
\end{equation}

To model realistic maritime scenarios where vessel density is not highest at the coast but peaks at a certain distance offshore (e.g., in shipping lanes or anchorage zones), we adopt a spatial profile $f(d)$ based on the Gamma distribution. This form is chosen for its flexibility in capturing such rise-and-fall density profiles while remaining mathematically tractable. We define $f(d)$ using the kernel of a Gamma PDF:
\begin{equation}\label{eq:f_gamma} % eq.20
	f(d) = d^{\alpha-1} \exp ({-\beta d}) ,
\end{equation}
where $\alpha \ge 1$ and $\beta > 0$ are the shape and rate parameters, respectively.
Substituting \eqref{eq:f_gamma} into the general PDF expression \eqref{eq:g_d_general} yields the final distance distribution for our model. The integral in the denominator can be solved using the definition of the Gamma function, resulting in:
\begin{align} \label{eqgd} % eq.21
	g(d) &= \frac{\beta^{\alpha+1} d^\alpha \exp ({-\beta d})}{\Gamma(\alpha\! +\! 1)} , ~ d \ge 0.
\end{align}
This resulting PDF is itself a Gamma distribution with a new shape parameter $\alpha' = \alpha+1$ and rate parameter $\beta' = \beta$.

\section{Performance Analysis}\label{S5}

In this section, we analyze the system performance in terms of coverage probability. The coverage probability is defined as the likelihood that the SINR at a receiver exceeds a predefined threshold, required for successful signal demodulation. A user is considered to be in coverage if this condition is met.

\subsection{Coverage Probability}\label{S5.1}

The main challenge in the uplink arises from co-channel interference generated by other transmitting vessels. To analyze its impact on coverage, we first derive the Laplace transform of the aggregate interference.

\begin{Lemma}\label{L1}
The Laplace transform of the aggregate interference, $\mathcal{L}_{I_{S_m}}(s)$, is derived by considering the contributions from interferers in the near-field (up to $d_{\rm{break}}$) and far-field (beyond $d_{\rm{break}}$), corresponding to the two-ray model:
\begin{align}\label{eqL1near} % eq.22
	\mathcal{L}_{I, \mathrm{near}}(s) =& \exp \Bigg(\pi \int_0^{d_{\mathrm{break}}} \left( {\left( \!1\!+\!\frac{s P_{S_i} }{ m  	P\!L_{\rm{2}\textrm{-}\rm{ray}}(d_i) }\!\right)^{\!\!\!-m}}-1 \right) \nonumber \\
	& \times p_a \mu_0  d_i^{\alpha} \exp ({-\beta d_i}) \,\mathrm{d}d_i \Bigg),
\end{align}
and the three-ray path loss model:
\begin{align}\label{eqL1far} % eq.23
	\mathcal{L}_{I, \mathrm{far}}(s_1) = &\exp \Bigg(\pi \int_{d_{\mathrm{break}}}^{d_{\mathrm{LOS}}} \left( {\left(\!1\!+\!\frac{s_1 P_{S_i}}{  m  P\!L_{\rm{3}\textrm{-}\rm{ray}}(d_i) }\!\right)^{\!\!\!-m}}-1 \right) \nonumber 
\end{align}
\begin{align}
	& \times p_a \mu_0  d_i^{\alpha} \exp ({-\beta d_i}) \,\mathrm{d}d_i \Bigg),
\end{align}
respectively, where $p_a$ denotes the effective activity factor on the considered time-frequency resource.
\end{Lemma}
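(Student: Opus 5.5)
The plan is to use the standard probability generating functional (PGFL) argument for a Poisson point process, applied to the thinned active-transmitter process $\Phi_{\mathrm{S}}'$ with intensity $\mu'(d)=p_a\mu_0 f(d)$ from \eqref{eq2-1}.

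\textbf{Step 1: split the interference.} Partition the half-plane support of $\Phi_{\mathrm{S}}'$ into the annular regions $\{d<d_{\mathrm{break}}\}$ and $\{d_{\mathrm{break}}\le d<d_{\mathrm{LOS}}\}$. Interferers beyond $d_{\mathrm{LOS}}$ are discarded, consistent with the piecewise model, which is defined only up to the LOS horizon. By the restriction property of Poisson processes, the two restricted processes are independent. Hence $I_{S_m}=I_{\mathrm{near}}+I_{\mathrm{far}}$ with independent summands, and $\mathcal{L}_{I_{S_m}}(s)=\mathcal{L}_{I,\mathrm{near}}(s)\,\mathcal{L}_{I,\mathrm{far}}(s)$. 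On each region the path gain is $G(d)=1/P\!L_{\text{2-ray}}(d)$ or $1/P\!L_{\text{3-ray}}(d)$, respectively. The statement is written generically, so either the ideal or the rough version from Theorem~\ref{T1} may be inserted.

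\textbf{Step 2: condition and average fading.} For the near part,
\[
\mathcal{L}_{I,\mathrm{near}}(s)=\mathbb{E}_{\Phi}\Big[\prod_{S_i}\mathbb{E}_{h_i}\big[\exp(-sP_{S_i}|h_i|^2/P\!L_{\text{2-ray}}(D_i))\big]\Big].
\]
The fading gains are i.i.d. and independent of $\Phi$. Since $|h_i|^2$ is normalized Gamma with shape $m$ and scale $1/m$, its Laplace transform gives the inner expectation as $\big(1+sP_{S_i}/(m\,P\!L_{\text{2-ray}}(D_i))\big)^{-m}$.

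\textbf{Step 3: apply the PGFL and reduce to one dimension.} The PGFL of a Poisson process gives
\[
\mathbb{E}\prod v(x)=\exp\Big(-\int (1-v(x))\,\mu'(x)\,\mathrm{d}x\Big).
\]
Write the integral in polar coordinates, $\mathrm{d}x=d\,\mathrm{d}d\,\mathrm{d}\theta$. Because both the intensity and the path loss are isotropic, the angular integral over the half-plane gives $\pi$. Combining $d\cdot f(d)=d^{\alpha}e^{-\beta d}$ via \eqref{eq:f_gamma} yields exactly the integrand $p_a\mu_0 d_i^{\alpha}e^{-\beta d_i}$, and \eqref{eqL1near} follows. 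The far part is identical with limits $[d_{\mathrm{break}},d_{\mathrm{LOS}})$ and $P\!L_{\text{3-ray}}$, giving the second expression.

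\textbf{Main obstacle.} The delicate points are justification rather than computation. First, the typical transmitter should be excluded from the interference field; Slivnyak's theorem for Poisson processes, which holds also for the non-homogeneous case via the reduced Palm distribution, ensures that conditioning on $S_m$ leaves the law of the other points unchanged. Second, the near-field integral must converge near $d\to0$ and at the two-ray nulls. At the nulls $P\!L_{\text{2-ray}}\to\infty$, so the integrand tends to $0$, which is harmless. As $d\to0$ the integrand is bounded by $p_a\mu_0 d^{\alpha}$, which is integrable since $\alpha\ge1$. Exchanging the expectation with the infinite product is justified by monotone convergence, because all factors lie in $[0,1]$.
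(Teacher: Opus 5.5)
Your proposal is correct and follows essentially the same route as the paper's appendix proof: you average the Nakagami-$m$ fading via the normalized Gamma MGF, apply the PGFL of the thinned PPP with intensity $p_a\mu_0 f(d)$, and pass to polar coordinates over the half-plane, so that $\mathrm{d}\mathbf{x}\to\pi d_i\,\mathrm{d}d_i$ and $d_i f(d_i)=d_i^{\alpha}e^{-\beta d_i}$. The paper does not give your additional justifications (independence of the restricted near/far processes, Slivnyak's theorem to exclude the typical transmitter, and integrability as $d\to0$ and at the multipath nulls), but they are sound and make the argument more complete.
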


\begin{proof}
See Appendix~\ref{ApB}.
\end{proof}

\begin{Theorem}\label{The2}
The uplink coverage probability for a typical vessel  connecting to the OS, averaged over its spatial distribution $g(d)$, is given by:
\begin{align}\label{eqThe21} % eq.24
	P_{\mathrm{cov}}^{\mathrm{UL}} =& \int_0^{d_{\rm{break}}} \,\sum_{n=1}^{m}(-1)^{n+1}\binom{m}{n}\exp\left(-s{N_0} \right) \nonumber \\
	& \hspace*{-5mm} \times\exp \Bigg(\pi \int_0^{d_{\mathrm{break}}} \left( {\left( \!1\!+\!\frac{s P_{S_i} }{ m P\!L_{\rm{2}\textrm{-}\rm{ray}}(d_i)  }\!\right)^{\!\!\!-m}}\!\!\! -1 \right) \nonumber \\
	& \hspace*{-5mm} \times p_a \mu_0  d_i^{\alpha} \exp ({-\beta d_i}) \mathrm{d}d_i \Bigg) \frac{\beta^{\alpha+1} d^\alpha \exp({-\beta d})}{\Gamma(\alpha+1)} \,\mathrm{d}d \nonumber \\
	& \hspace*{-5mm} + \int_{d_{\mathrm{break}}}^{d_{\rm{LOS}}} \sum_{n=1}^{m}(-1)^{n+1} \binom{m}{n} \exp \left( -s_1 {N_0}  \right) \nonumber \\
	& \hspace*{-5mm} \times \exp \Bigg(\pi \int_{d_{\mathrm{break}}}^{d_{\mathrm{LOS}}} \left( {\left(\!1\!+\!\frac{s_1 P_{S_i} }{   m P\!L_{\rm{3}\textrm{-}\rm{ray}}(d_i)  }\!\right)^{\!\!\!-m}}\!\!-1 \right) \nonumber \\
	& \hspace*{-5mm}\times p_a \mu_0  d_i^{\alpha} \exp ({-\beta d_i}) \,\mathrm{d}d_i\! \Bigg) \frac{\beta^{\alpha+1} d^\alpha \exp ({-\beta d})}{\Gamma(\alpha+1)} \,\mathrm{d}d,
\end{align}
where $s\! =\! \frac{-n \,\eta \,T\,P\!L_{\rm{2}\textrm{-}\rm{ray}}(d)}{P_{S_m}}$, $s_1\! =\! \frac{-n \,\eta \,T\,P\!L_{\rm{3}\textrm{-}\rm{ray}}(d)}{P_{S_m}}$, $P\!L$ is either $P\!L^{\rm{ideal}}$, given by (\ref{eq2-3}) and (\ref{eq2-4}), or $P\!L^{\rm{rough}}$, given by (\ref{eq3-4}) and (\ref{eq3-5}).
\end{Theorem}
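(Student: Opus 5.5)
The argument follows the standard stochastic-geometry route for Nakagami-$m$ fading. First condition on the distance $D_{S_m}=d$ of the typical vessel and split according to whether $d<d_{\rm break}$ (two-ray region) or $d_{\rm break}\le d<d_{\rm LOS}$ (three-ray region). Then
\[
P_{\rm cov}^{\rm UL}=\int_0^{d_{\rm LOS}}\mathbb{P}\big(\mathrm{SINR}_{S_m}>T\mid D_{S_m}=d\big)\,g(d)\,\mathrm{d}d ,
\]
with $g(d)$ given by \eqref{eqgd}. The coverage event given $d$ is
\[
|h_{S_m}|^2>\frac{T\,P\!L(d)\,(I_{S_m}+N_0)}{P_{S_m}} .
\]
Here $P\!L$ is the two-ray or three-ray expression, ideal or rough; the derivation does not depend on which one is used, since Theorem~\ref{T1} only changes the deterministic function $P\!L(\cdot)$.

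The key step is to handle the Gamma-distributed $|h_{S_m}|^2$ with unit mean and integer shape $m$. I would use Alzer's inequality, which is tight and is the standard approximation in this literature: the Gamma CDF satisfies $\mathbb{P}(|h|^2<x)\approx(1-e^{-\eta x})^m$ with $\eta=m(m!)^{-1/m}$. The complementary CDF then becomes $1-(1-e^{-\eta x})^m$. Expanding by the binomial theorem gives
\[
\sum_{n=1}^m(-1)^{n+1}\binom{m}{n}e^{-n\eta x}.
\]
Substituting $x=T\,P\!L(d)(I+N_0)/P_{S_m}$ and taking the expectation over the interference, each term factorizes as
\[
e^{-n\eta T P\!L(d)N_0/P_{S_m}}\;\mathcal{L}_{I}\!\left(\frac{n\eta T P\!L(d)}{P_{S_m}}\right).
\]
The sign convention in the statement, with $s$ defined as negative and $\exp(-sN_0)$, corresponds to the same quantity written with the minus sign absorbed. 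I will keep track of this carefully so that the result matches the stated form.

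Next I insert Lemma~\ref{L1}. For a desired vessel in the two-ray region I use $\mathcal{L}_{I,\rm near}$; in the three-ray region I use $\mathcal{L}_{I,\rm far}$ with argument $s_1$. This follows the region-wise pairing that the statement adopts, in which interference in each branch is taken from the interferers' matching annulus. In each case the Laplace transform comes from the probability generating functional of the thinned NHPPP $\Phi_{\rm S}'$ with intensity $p_a\mu_0 f(d)$ from \eqref{eq2-1}, combined with the Nakagami moment generating function $(1+sP/(mP\!L))^{-m}$. Multiplying by $g(d)$ and integrating over each region then yields the two integrals in \eqref{eqThe21}.

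The main obstacle is justifying the passage from the exact Gamma tail to the binomial sum. An exact evaluation would require derivatives of $\mathcal{L}_I$ up to order $m-1$, which is intractable with the piecewise path loss. I would present the result as following from the Alzer bound, which is exact for $m=1$ and a tight approximation otherwise. A secondary issue is stating explicitly how interference is paired with the region, since cross-region interferers are dropped in each branch. I would note this as the modeling convention inherited from Lemma~\ref{L1}. The remaining steps are interchanging expectation and summation, which is allowed for a finite sum, and applying Fubini, which holds because all integrands are bounded on compact intervals.
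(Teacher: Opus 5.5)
Your proposal is correct and follows essentially the same route as the paper's proof in Appendix~C: condition on $D_{S_m}=d$, split at $d_{\rm break}$, apply Alzer's approximation with $\eta=m(m!)^{-1/m}$, expand binomially, and insert the region-matched Laplace transforms of Lemma~\ref{L1} before averaging over $g(d)$. Your positive Laplace argument $n\eta T P\!L(d)/P_{S_m}$ is the right one. The minus sign in the statement's definitions of $s$ and $s_1$ cannot be reconciled with $\exp(-sN_0)$ as a mere convention; it is a typo in the statement.
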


\begin{proof}
See Appendix~\ref{ApC}. 
\end{proof}

\subsection{Average Ergodic Rate}\label{S5.2}
	
The average ergodic rate, measured in bits per second per Hertz (bit/s/Hz), is a key performance metric that characterizes the long-term achievable spectral efficiency averaged over the fading states. Formally, the average ergodic rate is defined as:
\begin{eqnarray}\label{eqAER}	% eq.25
	\bar{C} \triangleq \frac{1}{K}\mathbb{E}\left[ \log_2\left(1+\text{SINR}\right) \right] .
\end{eqnarray}

\begin{Theorem}\label{T3}
	 Assuming Nakagami-$m$ fading, the average ergodic rate for a typical vessel in the uplink is given by:
\begin{align}\label{eqT4} % eq.26
	& \bar{C}^{\mathrm{UL}} \nonumber \\
	&= \frac{1}{K}\int_{t>0}\Bigg(\int_0^{d_{\rm{break}}}\sum_{n=1}^{m}(-1)^{n+1}\binom{m}{n} \exp\left(-s'{N_0} \right) \nonumber \\
	& \times\exp \Bigg(\pi \int_0^{d_{\rm{break}}}\left( {\left( \!1\!+\!\frac{s' P_{S_i} }{ m \, P\!L_{\rm{2}\textrm{-}\rm{ray}}(d_i)  }\!\right)^{\!\!\!-m}}-1 \right])\nonumber \\
	& \times p_a \mu_0 d_i^{\alpha} \exp ({-\beta d_i}) \,\mathrm{d}d \Bigg) \frac{\beta^{\alpha+1} d^\alpha \exp ({-\beta d})}{\Gamma(\alpha+1)} \,\mathrm{d}d \nonumber\\
		& + \int_{d_{\mathrm{break}}}^{d_{\rm{LOS}}} \sum_{n=1}^{m}(-1)^{n+1} \binom{m}{n} \exp  \left( -s_1' {N_0}  \right) \nonumber 
\end{align}
\begin{align}	
	& \times\exp \left(\pi \int_{d_{\rm{break}}}^{d_{\rm{LOS}}} \left( {\left(\!1\!+\!\frac{s_1' P_{S_i} }{   m \, P\!L_{\rm{3}\textrm{-}\rm{ray}}(d_i) }\!\right)^{\!\!\!-m}}\!\!\!\!-1 \right) \right.\nonumber \\
	& \times p_a \mu_0 d_i^{\alpha} \exp ({-\beta d_i}) \,\mathrm{d}d_i\! \Bigg) \frac{\beta^{\alpha+1}  d^\alpha \exp ({-\beta d})}{\Gamma(\alpha+1)}\mathrm{d}d \Bigg)\mathrm{d}t ,
\end{align}	
where $s'\! =\! \frac{-n \,\eta \,(2^t-1)\,P\!L_{\textrm{2-ray}}(d)}{P_{S_m}}$ and $s_1'\! =\! \frac{-n \,\eta \,(2^t-1)\,P\!L_{\textrm{3-ray}}(d)}{P_{S_m}}$.
\end{Theorem}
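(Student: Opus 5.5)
The plan is to reduce the ergodic rate to an integral of coverage probabilities and then reuse Theorem~\ref{The2} pointwise in the threshold. Since $\log_2(1+\mathrm{SINR})$ is a nonnegative random variable, the tail-integral identity gives
\begin{equation*}
\mathbb{E}\left[\log_2(1+\mathrm{SINR})\right]=\int_{t>0}\mathbb{P}\left(\log_2(1+\mathrm{SINR})>t\right)\mathrm{d}t=\int_{t>0}\mathbb{P}\left(\mathrm{SINR}>2^t-1\right)\mathrm{d}t .
\end{equation*}
Monotonicity of $x\mapsto\log_2(1+x)$ justifies the second equality. The factor $1/K$ in \eqref{eqAER} is a deterministic constant and passes outside. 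The task therefore becomes evaluating $P_{\mathrm{cov}}^{\mathrm{UL}}$ with threshold $T$ replaced by $2^t-1$.

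For each fixed $t$, I will repeat the argument behind Theorem~\ref{The2}.
\begin{itemize}
\item Condition on the serving distance $D_{S_m}=d$.
\item Since $|h_{S_m}|^2$ is normalized Gamma with integer $m$, bound its CCDF with Alzer's inequality, $\mathbb{P}(|h|^2>x)\approx 1-(1-e^{-\eta x})^m$ with $\eta=m(m!)^{-1/m}$. Expanding binomially gives the alternating sum $\sum_{n=1}^m(-1)^{n+1}\binom{m}{n}\mathbb{E}[\exp(-n\eta x)]$ with $x=(2^t-1)PL(d)(I+N_0)/P_{S_m}$.
\item By independence of the noise and the interference, each expectation factors as $\exp(-s'N_0)\,\mathcal{L}_{I}(s')$ (with the paper's sign convention for $s'$).
\item Substitute the near-field and far-field Laplace transforms from Lemma~\ref{L1}, according to whether $d<d_{\mathrm{break}}$, using $PL_{\textrm{2-ray}}$ or $PL_{\textrm{3-ray}}$ respectively.
\end{itemize}
Finally, deconditioning on $d$ with the Gamma distance density \eqref{eqgd}, split at $d_{\mathrm{break}}$ and truncated at $d_{\mathrm{LOS}}$, gives the inner bracket of \eqref{eqT4} with $T\to 2^t-1$, i.e.\ $s\to s'$ and $s_1\to s_1'$. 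Integrating over $t$ then produces the stated expression, valid for either $PL^{\rm ideal}$ or $PL^{\rm rough}$ as in Theorem~\ref{The2}.

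The main obstacle is justifying the exchange of the $t$-integral with the conditional expectations over $d$, the fading and the point process. The integrand is nonnegative, so Tonelli's theorem applies, and the exchange holds as long as the outer $t$-integral is taken last. A secondary subtlety is that the Alzer step is an approximation or bound rather than an identity, so the formula inherits the same tightness as Theorem~\ref{The2}. I would state this explicitly rather than claim exact equality.
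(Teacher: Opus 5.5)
Your proposal is correct and follows the paper's proof in Appendix~\ref{ApD} essentially step for step. Both use the tail-integral identity with $\log_2(1+\mathrm{SINR})>t \Leftrightarrow \mathrm{SINR}>2^t-1$, then rerun the coverage argument of Theorem~\ref{The2} at threshold $2^t-1$: Alzer's bound, binomial expansion, the Laplace transforms of Lemma~\ref{L1} selected by the serving region, and deconditioning with \eqref{eqgd}. Your Tonelli justification and your caveat that the Alzer step is a bound rather than an identity are sound additions the paper omits; note, however, that the factorization into $\exp(-s'N_0)\,\mathcal{L}_I(s')$ actually requires $s'=+n\eta(2^t-1)P\!L(d)/P_{S_m}$ (and likewise for $s_1'$), so the minus sign in the stated $s'$ is a typo to correct rather than a convention to adopt.
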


\begin{proof}
See Appendix~\ref{ApD}.
\end{proof}

\section{Numerical Results}\label{S6}

In this section, we verify the derived analytical expressions using Monte Carlo simulations with 100,000 runs. The results from the analytical derivations are indicated in the following figures as `Analysis', while the Monte Carlo results are indicated in the figures as `Simulation'. Unless otherwise specifically stated, the default system parameters and values utilized in the simulations are listed in Table~\ref{Table2},  where the carrier frequency is set to $f_c=162$ MHz according to the Global Maritime Distress and Safety System Manual of International Maritime Organization \cite{imo2024gmdss}. Unless otherwise stated, the network-level results use the independent-thinning baseline with $p_a=0.2$. They quantify propagation-driven coverage and rate trends under a controlled mean activity load and should not be interpreted as packet-level or protocol-level performance predictions for operational SOTDMA- or CSTDMA-based AIS networks.
\begin{table}[h] \scriptsize
	\vspace*{-1mm}
	\caption{Default Simulation System Parameters.} 
	\label{Table2}
	\vspace*{-6mm}
	\begin{center}
		\resizebox{\columnwidth}{!}{
			\begin{tabular} {c|c|c}
				\Xhline{1.2pt}
				\toprule
				\textbf{Notation} & \textbf{Parameters}& \textbf{Values} \\
				\midrule
				Vessel Transmit Power & $P_{S_m}$ & 30 dBm  \\
				Noise Power & $N_0$ & $-174$ dBm \\
				Carrier Frequency & $f_c$ & 162 MHz \\
				Bandwidth & $B$ & 25 kHz \\
				Nakagami-$m$  Parameter & $m$ & 3 \\
				BS's Receiver Antenna Height & $h_{\textrm{r}}$ & 50 m \\
				Vessel's Transmitter Antenna Height & $h_{\textrm{t}}$ & 10 m \\
				Evaporation Duct Height & $h_{\textrm{e}}$ & 30.5 m \\
				Earth Radius & $R_{\text{earth}}$ & 6371 km \\
				Annular/Gamma Shape & $\alpha$ & 3 \\
				Annular/Gamma Rate & $\beta$ & 1/500 km \\
				Vessel Base Density & $\mu_0$ & $1 \times 10^{-4}$ m$^{-2}$ \\
				Effective Activity Factor & $p_a$ & 0.2  \\
				\bottomrule
			\end{tabular}	
		}
	\end{center}
	\vspace*{-3mm}
\end{table}

The sea-state-dependent curves reported below are generated under the adopted coherent-reflection attenuation model. For severe sea states, particularly $H_{\mathrm{s}}>3$~m, the unresolved diffuse field may become dominant, whereas the present model represents it only approximately through statistical small-scale fading. Results in this regime are therefore used to illustrate the sensitivity and extrapolated behavior of the adopted model rather than to claim a complete quantitative prediction of severe-sea propagation.
\begin{figure}[!b]
	\vspace*{-5mm}
	\begin{center}
		\includegraphics[width=1\linewidth]{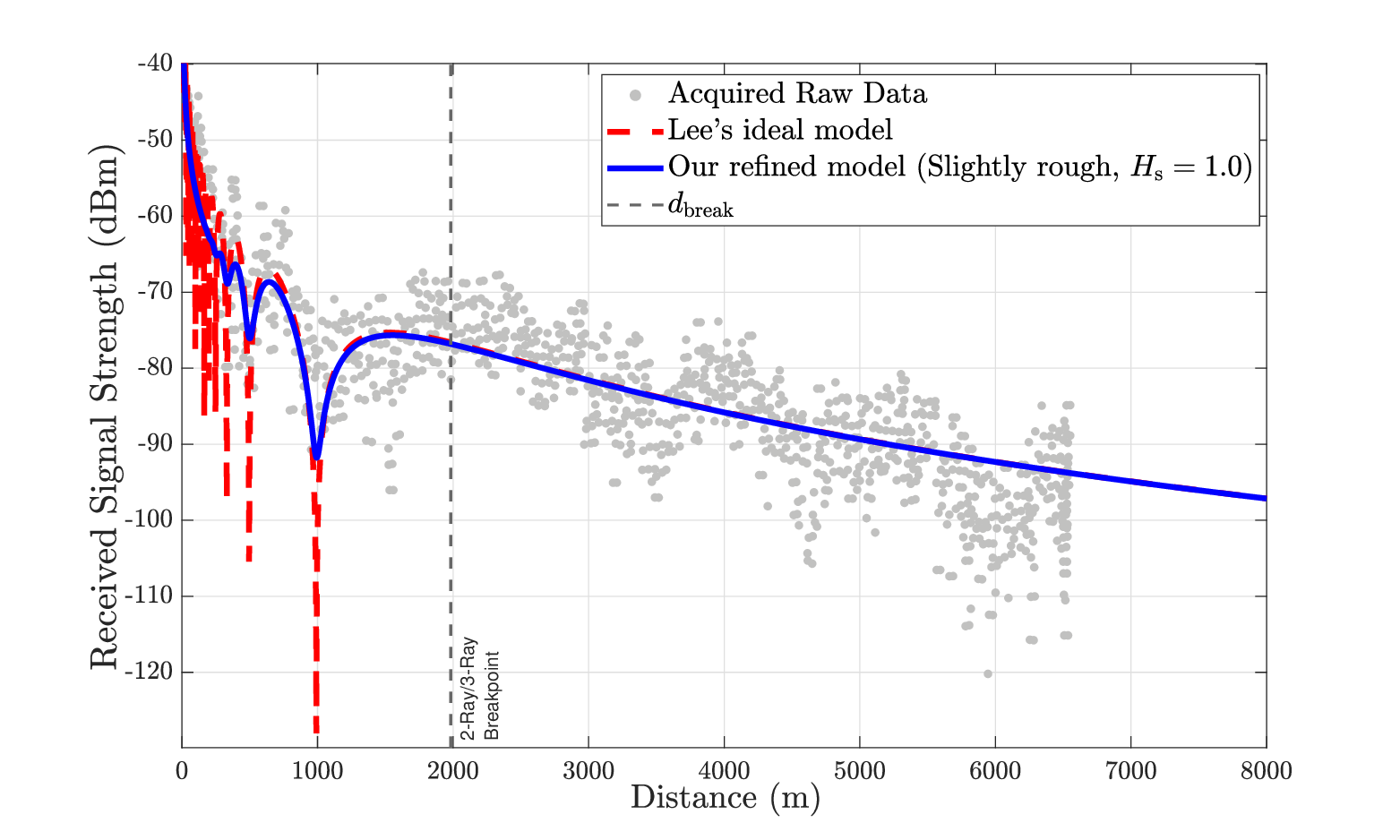}
	\end{center}
	\vspace*{-7mm}
	\caption{Comparison of Lee's ideal and our refined two-ray models with measured signal strength data.}
	\label{fig:2} % Fig.3
	\vspace*{-1mm}
\end{figure}

\subsection{Refined Path Loss Model}\label{S6.1}

Fig.~\ref{fig:2} presents a qualitative comparison between the model predictions and empirical data from Lee \cite{Lee20214} within the two-ray propagation region. 

{\color{black}
	In this subsection, we first consider the measurement campaign in \cite{Lee20214}, whose carrier frequency is $f_c=5.15$~GHz.	At this shorter wavelength, $\lambda\approx0.058$~m, the sea surface is electrically rougher than at VHF under the same physical sea state, making the attenuation of the coherent specular reflection component more visible. Accordingly, this dataset is used only as a C-band, roughness-sensitive comparison of the Rayleigh-criterion-based attenuation mechanism.	Because electrical roughness is wavelength dependent, the C-band comparison is not used as a substitute for direct rough-sea validation of the subsequent 162~MHz results.
	
	The VHF evidence is considered separately. The 160~MHz measurements examine the calm-sea, electrically smooth limit near the AIS frequency, while the 162~MHz rough-sea cases are presented as model-based sensitivity analyses.
	Thus, the two measurement datasets provide complementary but non-interchangeable evidence.
}

Both the ideal and refined models correctly identify the locations of the multipath fading nulls. 
However, the ideal model, which assumes a perfect reflection coefficient ($\rho_{\text{s}}=-1$), generates deep nulls that underestimate the measured signal strength within these fades. 
Using the representative setting $H_{\text{s}}=1$~m, selected with reference to the wind-speed range of 10--15 knots reported for the measurement campaign \cite{Lee20214}, the refined model produces shallower fading nulls and a lower residual error than the ideal smooth-sea model. This C-band comparison assesses whether attenuation of the coherent specular reflection component can account for the observed fading depths; it is not used to validate the subsequent rough-sea VHF predictions.

\begin{figure}[!b]
	\vspace*{-5mm}
	\begin{center}
		\includegraphics[width=1\linewidth]{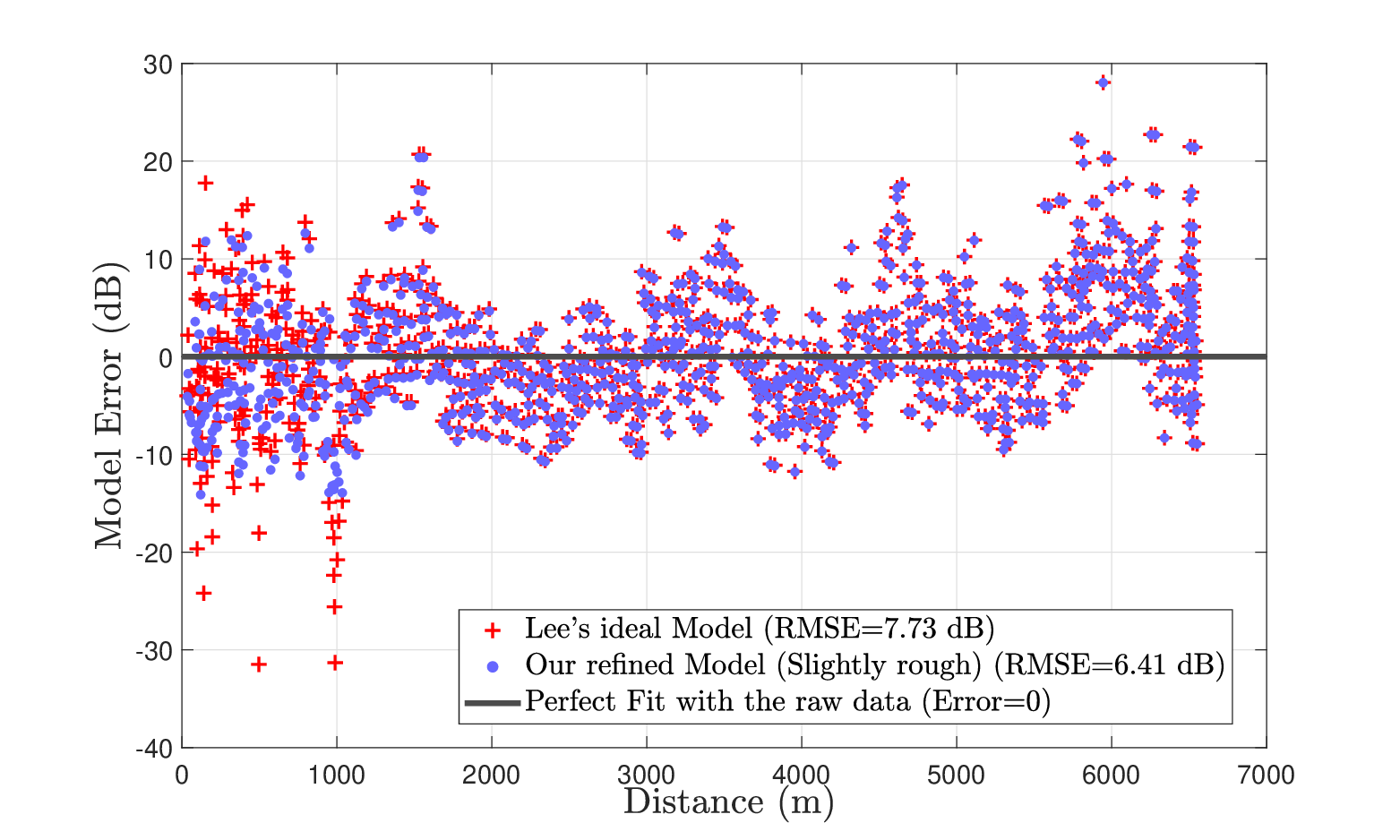}
	\end{center}
	\vspace*{-7mm}
	\caption{Residual plot and goodness-of-fit analysis for the two-ray models.}
	\label{fig:3} % Fig.4
	\vspace*{-1mm}
\end{figure}

The residual analysis is further illustrated in Fig.~\ref{fig:3}. To maintain theoretical consistency, the goodness-of-fit is evaluated exclusively for data points within the valid two-ray model range  $d < d_{\mathrm{break}}$. The residuals for the ideal model exhibit a large variance and a systematic negative bias, particularly at distances corresponding to the fading nulls. The residuals for the refined model, however, are more tightly clustered around the zero-error line, indicating a more consistent prediction.
This visual assessment is corroborated by the RMSE, which is reduced from 7.73~dB for the ideal model to 6.41~dB for the refined model within this region.
{\color{black}This reduction provides quantitative support for the roughness-induced attenuation mechanism within the considered 5.15~GHz measurement campaign. It should not be interpreted as direct validation of rough-sea path loss at 160--162~MHz.}

\begin{figure}[!t]
	\vspace*{-1mm}
	\begin{center}
		\includegraphics[width=1\linewidth]{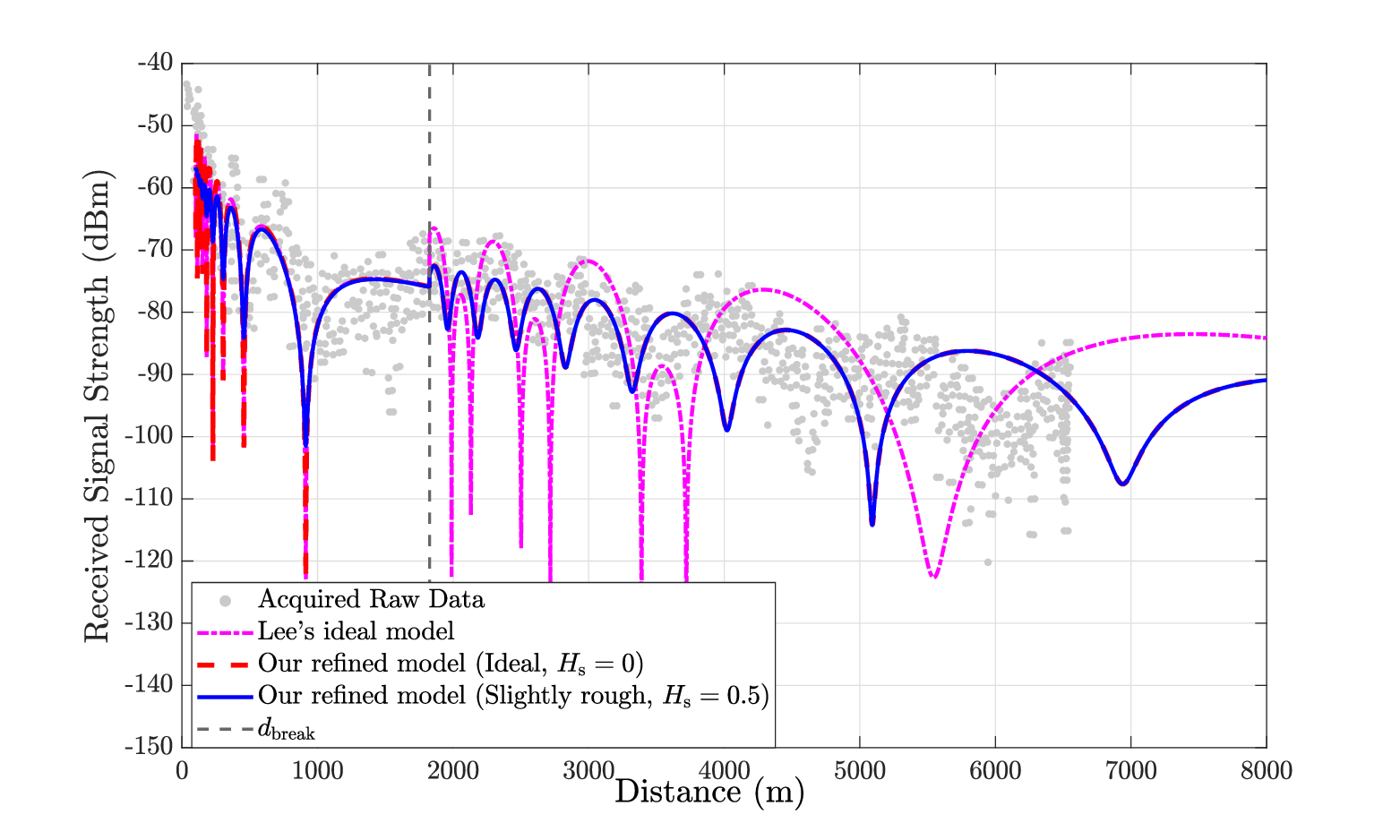}
	\end{center}
	\vspace*{-7mm}
	\caption{Comparison of Lee's ideal and our refined three-ray models with measured signal strength data.}
	\label{fig:4} % Fig.5
	\vspace*{-5mm}
\end{figure}

We next evaluate the model over the full propagation range, including the piecewise transition from the two-ray to the three-ray regime. The model fits and their corresponding residuals are presented in Figs.~\ref{fig:4} and \ref{fig:5}, respectively. 

A primary observation from Fig.~\ref{fig:4} is the difference in fading characteristics between the near-field ($d\! <\! d_{\text{break}}$) and far-field ($d\! >\! d_{\text{break}}$) regions. In the near-field, the signal exhibits frequent, deep fading nulls. This behavior is a consequence of the two-ray interference mechanism, where the phase difference between the direct and reflected paths changes rapidly at short ranges. The predictions of both the classical Lee model and the refined model are consistent with this behavior, as the influence of the evaporation duct is negligible at these distances. Beyond the break distance, the performance of the models diverges. The Lee model (dashed magenta line) continues to predict deep, periodic nulls and a rate of decay that results in an underestimation of the measured signal strength. In contrast, the refined model (solid black line), which incorporates the evaporation duct path, captures two key far-field effects: an elevation in the average signal strength and a reduction in fading depth. The refined model's predictions align more closely with the empirical data, suggesting that the three-ray approach is necessary for improved accuracy at longer ranges. 

\begin{figure}[!b]
	\vspace*{-5mm}
	\begin{center}
		\includegraphics[width=1\linewidth]{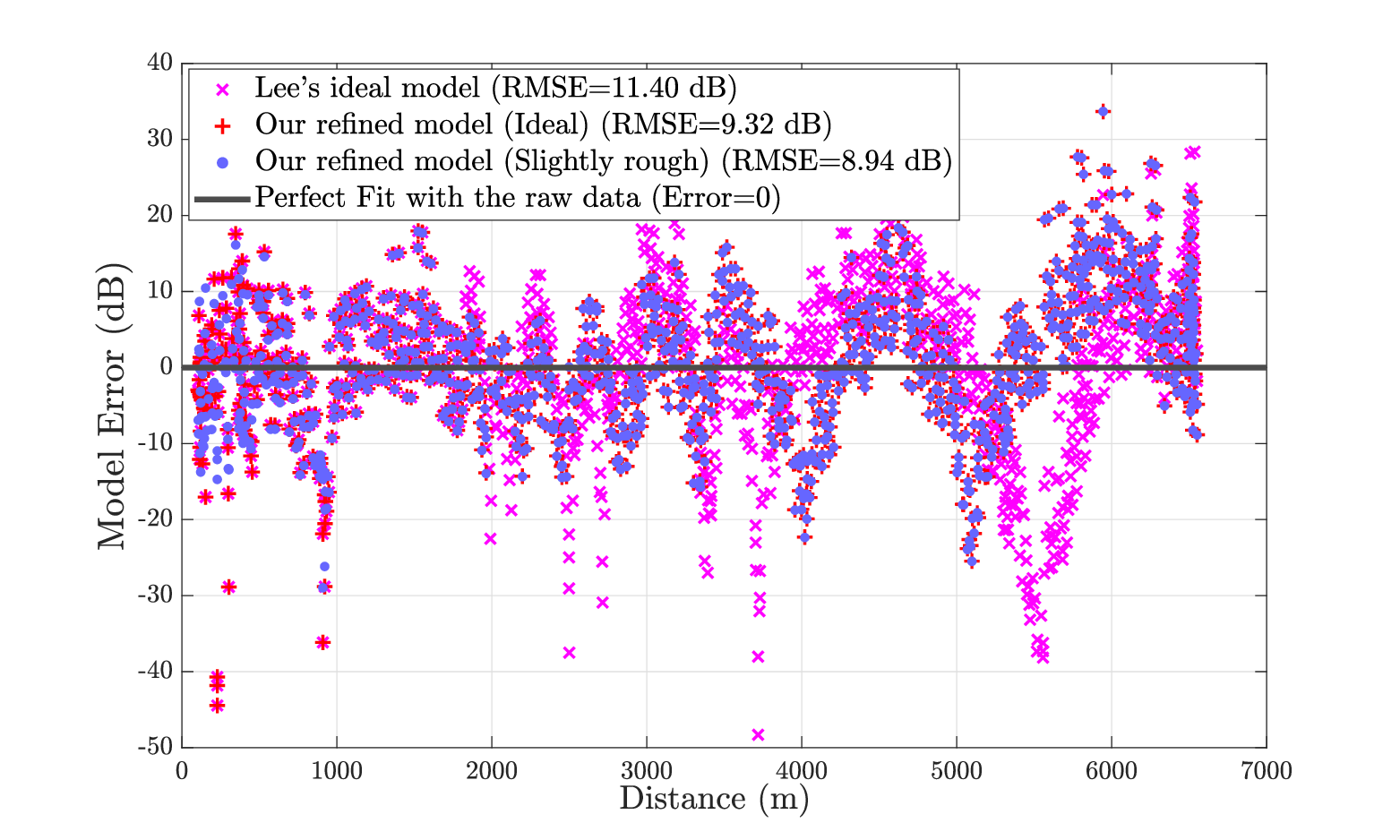}
	\end{center}
	\vspace*{-7mm}
	\caption{Residual plot and goodness-of-fit analysis for the three-ray models.}
	\label{fig:5} % Fig.6
	\vspace*{-1mm}
\end{figure}

A secondary observation pertains to the internal consistency of the refined model. In the far-field, the predictions for ideal and slightly rough conditions converge. This is consistent with the underlying physics, as the effect of surface roughness diminishes at small grazing angles $\psi$  characteristic of large distances. The roughness factor, $\xi \propto \sin(\psi)$, approaches zero as $d$ increases, causing the effective reflection coefficient to approach its ideal value of~-1. The model's behavior is consistent with the physical principle that the sea surface appears electromagnetically smoother at extended ranges.

The quantitative analysis from the residuals plot in Fig.~\ref{fig:5} reinforces these findings. While both models show large errors in the near-field, the refined model has a lower overall RMSE across the full range, i.e., 8.94~dB vs. 11.40~dB. {\color{black}
	Within this 5.15~GHz measurement campaign, the lower RMSE indicates that accounting for shallower fades associated with the attenuation of coherent reflection, together with the duct-induced component, improves agreement over the considered propagation range.
	This frequency-specific comparison is not extrapolated as empirical evidence for rough-sea VHF propagation.
}

\begin{figure}[!b]
	\vspace*{-4mm}
	\centering
	\subfigure[Calm-sea comparison with 160~MHz maritime measurements.]{
		\includegraphics[width=0.95\linewidth]{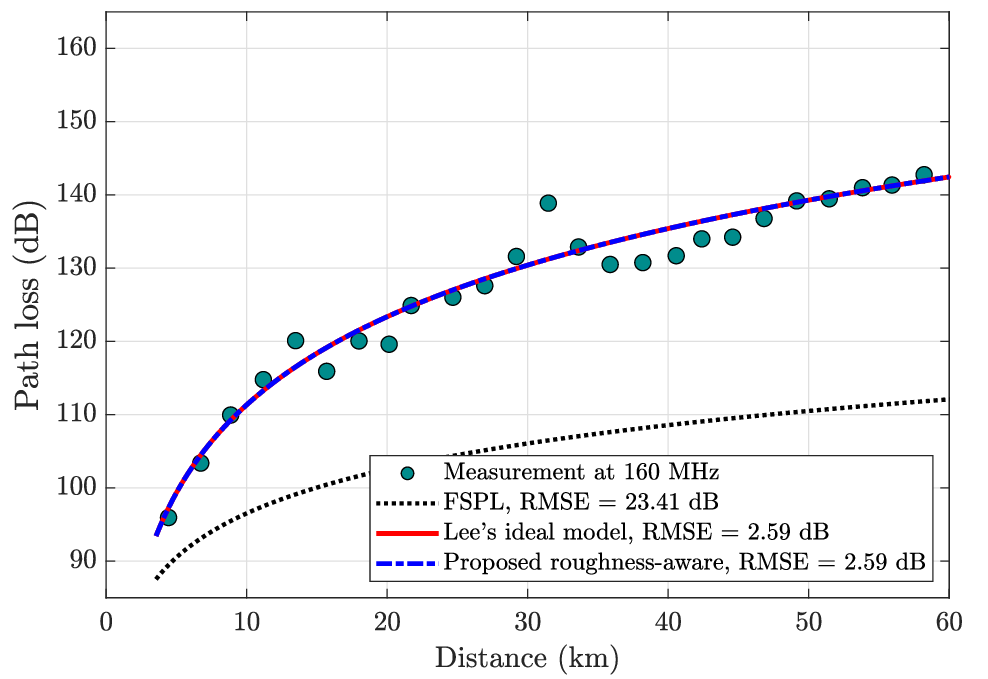}
		\label{Fig:VHF_a}}
	\subfigure[Wavelength-specific sea-state sensitivity predicted at $f_c=162$~MHz.]{
		\includegraphics[width=0.95\linewidth]{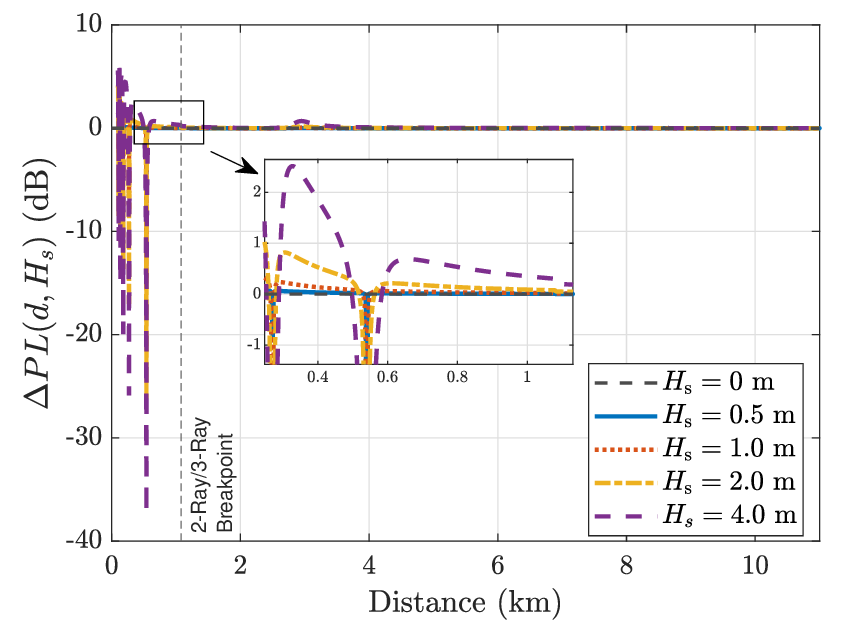}
		\label{Fig:VHF_b}}
	\vspace*{-2mm}
	\caption{Calm-sea VHF measurement comparison and wavelength-specific model sensitivity. Panel (a) examines the electrically smooth VHF limit under calm-sea conditions. Panel (b) presents model predictions and does not use rough-sea VHF measurement data.}
	\label{fig:VHF_validation}
	\vspace*{-1mm}
\end{figure}

Fig.~\ref{fig:VHF_validation} separates the available VHF measurement evidence from the rough-sea model predictions. Fig.~\ref{Fig:VHF_a} compares the proposed formulation with maritime path-loss measurements reported at 160~MHz \cite{hao2026measured}, close to the 162~MHz carrier used in the subsequent network analysis. The measurements were collected under calm-sea conditions. At this frequency and at small grazing angles, the sea surface is electrically smooth, so that $\rho_{\text{s}}\approx-1$ and the proposed formulation approaches Lee's smooth-sea model. Both models follow the measured path-loss trend with an RMSE of approximately 2.59~dB, whereas free-space path loss gives an RMSE of 23.41~dB. This agreement is consistent with the calm-sea VHF limiting behavior; it does not constitute validation of roughness-induced VHF path loss for $H_{\text{s}}>1$~m. The 5.15~GHz measurements cannot substitute for such validation because electrical roughness depends explicitly on wavelength. The 160~MHz data are also not used to validate an evaporation-duct-induced third path, whose strength depends on the atmospheric refractivity profile of the measurement campaign. Fig.~\ref{Fig:VHF_b} separately presents the wavelength-specific model prediction at $f_c=162$~MHz, quantified by
	\begin{equation}\label{eq:delta_pl_vhf}
		\Delta PL(d,H_{\mathrm{s}})
		=
		PL_{\mathrm{rough}}(d,H_{\mathrm{s}})
		-
		PL_{\mathrm{smooth}}(d).
	\end{equation}
	For small $H_{\mathrm{s}}$, the predicted deviation from the smooth-sea reference is weak. As $H_{\mathrm{s}}$ increases, attenuation of the coherent specular reflection component reduces the depth of destructive-interference nulls near the multipath fading regions. These curves are wavelength-specific model predictions rather than rough-sea VHF measurement results.

\begin{figure}[!t]
	\centering
	\includegraphics[width=0.5\textwidth]{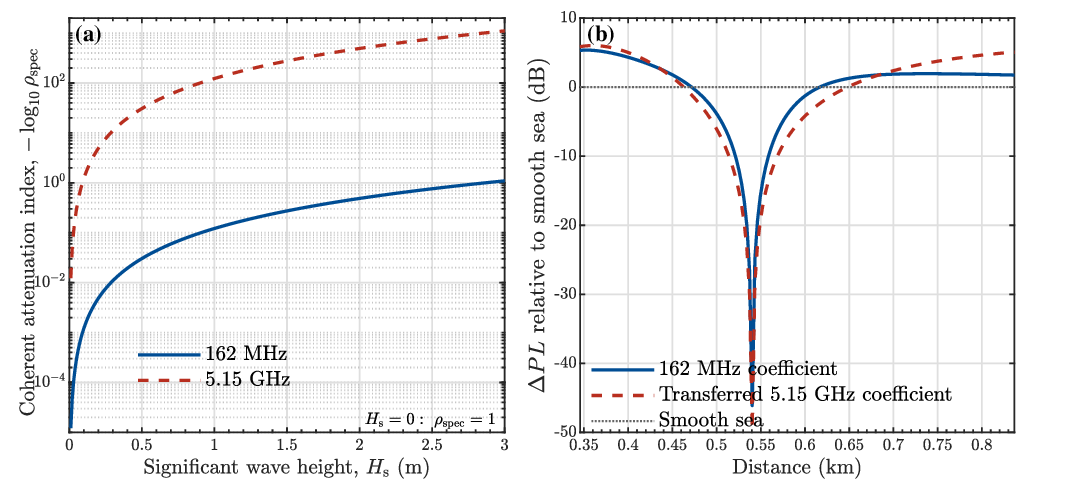}
	\vspace*{-8mm}
	\caption{Cross-frequency coefficient-transfer ablation. Panel (a) compares the logarithmic attenuation index $-\log_{10}\rho_{\mathrm{spec}}$ at 162~MHz and 5.15~GHz as a function of the significant wave height under the same reference geometry. Panel (b) shows the predicted 162~MHz path-loss change $\Delta PL=PL_{\mathrm{rough}}-PL_{\mathrm{smooth}}$, comparing	the wavelength-specific 162~MHz coefficient with a diagnostic transfer of the 5.15~GHz coefficient while retaining all other VHF propagation parameters.}
	\label{fig:frequency_transfer_ablation}
	\vspace*{-5mm}
\end{figure}
\textit{Cross-Frequency Transfer Ablation:} Fig.~\ref{fig:frequency_transfer_ablation} examines the consequence of transferring the C-band coherent-reflection coefficient to VHF. Panel~(a) compares the logarithmic attenuation index $-\log_{10}\rho_{\mathrm{spec}}$ at 162~MHz and 5.15~GHz under the same significant wave height and propagation geometry. The results show that the coherent-reflection attenuation exhibits strong wavelength dependence, and the coefficient obtained at C-band cannot represent the VHF behavior. Panel~(b) retains all 162~MHz propagation parameters, including the path distance, propagation phase, free-space loss, and breakpoint, while replacing only the wavelength-specific VHF coefficient with the coefficient evaluated at 5.15~GHz. The transferred coefficient changes the predicted depth and shape of the VHF fading region near the multipath null. This result demonstrates that the coherent-reflection coefficient must be independently evaluated according to the operating wavelength. This ablation is a model-to-model diagnostic of cross-frequency coefficient transfer and does not replace direct rough-sea VHF measurements.

\begin{figure}[!b]
	\vspace*{-5mm}
	\begin{center}
		\includegraphics[width=1\linewidth]{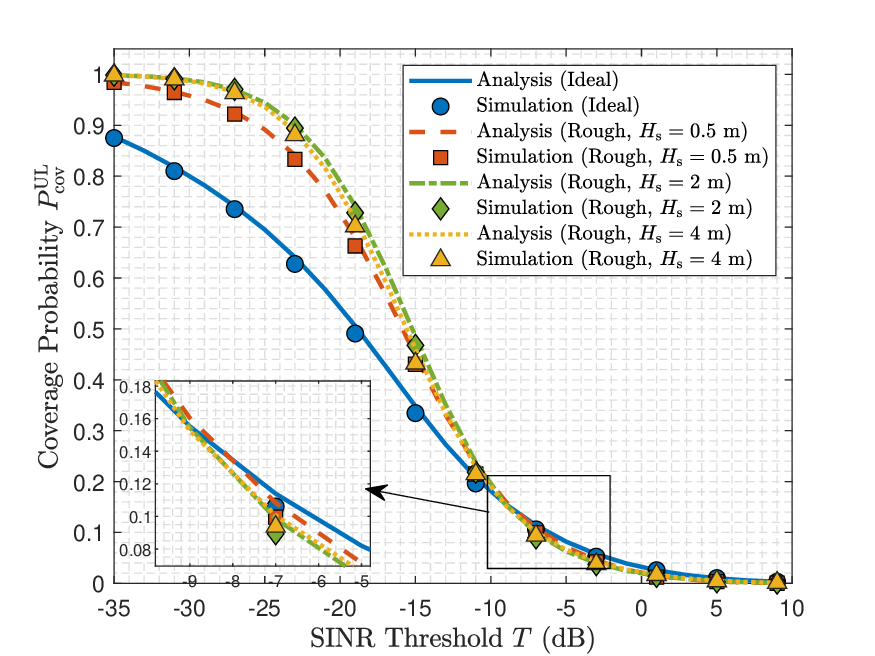}
	\end{center}
	\vspace*{-6mm}
	\caption{Coverage probability as a function of SINR threshold $T$ for different significant wave heights $H_{\textrm{s}}$.}
	\label{fig:6} % Fig.7
	\vspace*{-1mm}
\end{figure}

\subsection{Performance Evaluation and Discussion}\label{S6.2}

{\color{black}
	The following 162~MHz coverage and rate results are analytical and Monte Carlo evaluations under the proposed channel model.
	For rough sea states, particularly $H_{\text{s}}>1$\,m, they represent model-based performance predictions and should not be interpreted as directly measurement-validated performance forecasts for operational VHF maritime systems.
}

The uplink coverage probability $P_{\mathrm{cov}}^{\mathrm{UL}}$, as a function of the SINR threshold $T$, is depicted in Fig.~\ref{fig:6} for the proposed roughness-aware model and Lee's ideal model.  The Monte Carlo simulations agree closely with the analytical results across all test cases, validating the accuracy of the theoretical framework. A key performance crossover is observed in Fig.~\ref{fig:6}. Specifically, in the low-SINR regime, which corresponds to high-reliability requirements, the proposed rough-surface model consistently outperforms the ideal model. Conversely, this trend inverts in the high-SINR regime, where the ideal model exhibits a superior probability of achieving high SINR values. Notably, in the low-SINR region, increasing sea surface roughness proves beneficial.

This behavior is attributed to the dual effect of surface roughness on multipath propagation. The superior performance of the rough-surface model in the low-SINR regime is due to null mitigation. The deep path-loss nulls caused by destructive interference in the ideal model can severely degrade high-reliability links. Our proposed model captures the physical reality that roughness weakens the specular reflection, which substantially reduces the depth of these fading nulls. This raises the floor of the SINR distribution, increasing the probability of exceeding a low threshold. Conversely, the superior performance of the ideal model in the high-SINR regime is due to perfect constructive interference. The same scattering effect that mitigates nulls also reduces the magnitude of constructive signal peaks, which are regions of maximum signal enhancement created by the in-phase summation of the direct and reflected paths. This higher probability of achieving rare, high-SINR events explains why the ideal model's coverage probability is superior when the SINR threshold is high.

\begin{figure}[!b]
	\vspace*{-5mm}
	\begin{center}
		\includegraphics[width=1\linewidth]{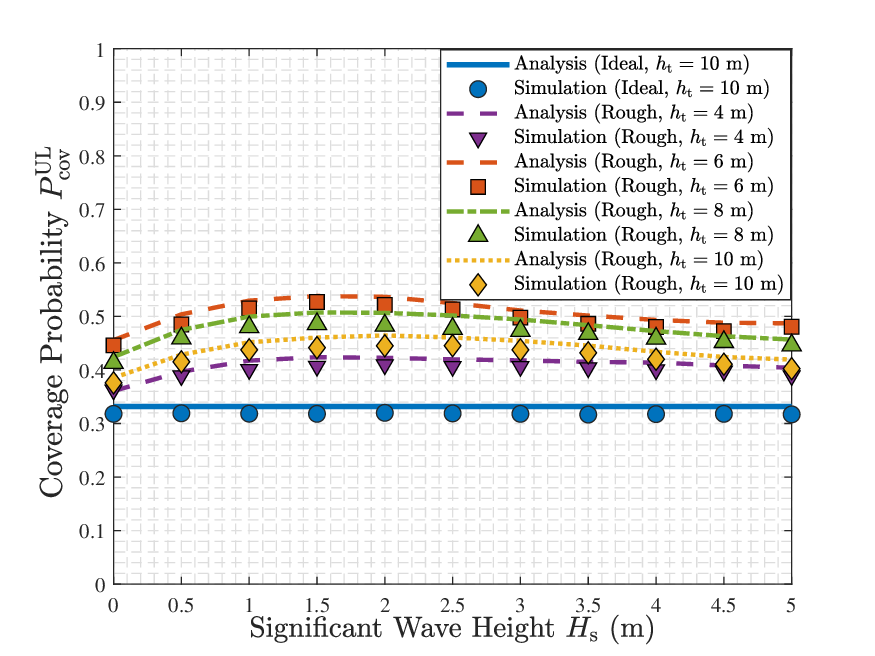}
	\end{center}
	\vspace*{-6mm}
	\caption{Coverage probability as a function of significant wave height $H_{\textrm{s}}$ for different transmitter antenna heights $h_{\textrm{t}}$.}
	\label{fig:7} % Fig.8
	\vspace*{-1mm}
\end{figure}

The uplink coverage probability as a function of significant wave height $H_{\textrm{s}}$ is illustrated in Fig.~\ref{fig:7} for several vessel antenna heights $h_{\textrm{t}}$, evaluated at a fixed SINR threshold of -15\,dB.  Three primary trends are evident from the results. First, our rough-surface model consistently predicts higher coverage than the ideal model, which yields more conservative predictions under the considered conditions. Second, for the considered vessel distribution and propagation geometry, the coverage probability first increases as $h_{\mathrm{t}}$ changes from 4~m to 6~m and then decreases at 8~m and 10~m. Among the tested heights, $h_{\mathrm{t}}=6$~m yields the largest coverage in this specific scenario. Third, under the adopted coherent-reflection attenuation model, the coverage probability varies non-monotonically with the significant wave height and reaches its largest values at approximately 1.5--2.5~m before declining. For $H_{\mathrm{s}}>3$~m, this trend should be interpreted as a model-based sensitivity result because the increasingly important diffuse field is not explicitly resolved.

This non-monotonic behavior arises from the alignment between the multipath interference pattern and the assumed vessel distribution. For the selected parameters, the vessel density peaks at approximately 500~m, and the interference pattern associated with $h_{\mathrm{t}}=6$~m aligns more favorably with this region than those associated with the other tested heights. The preferred height therefore depends on the target service distance and vessel distribution and should not be interpreted as a general antenna-height design rule.

\begin{figure}[!b]
	\vspace*{-5mm}
	\begin{center}
		\includegraphics[width=1\linewidth]{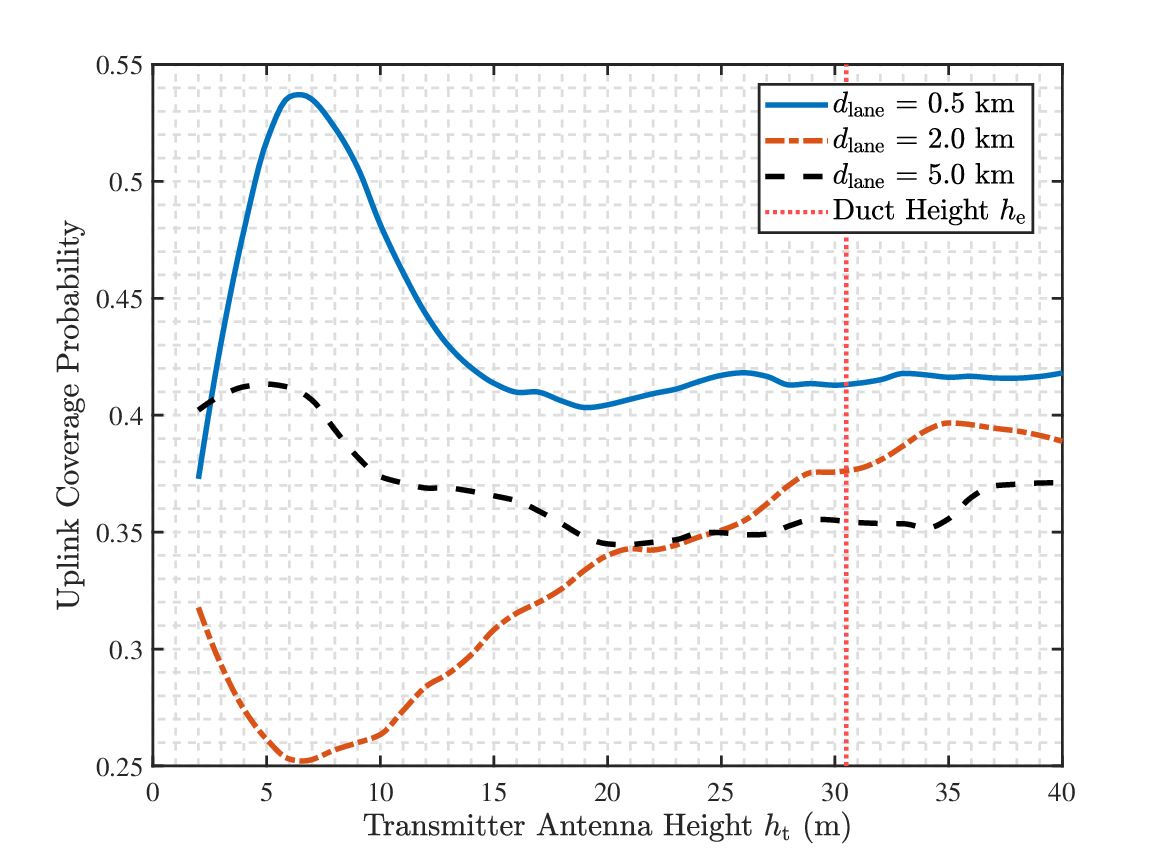}
	\end{center}
	\vspace*{-6mm}
\caption{Uplink coverage probability as a function of the transmitter antenna height $h_{\mathrm{t}}$ for different shipping-lane distances $d_{\mathrm{lane}}$. The height yielding the largest coverage within the tested range varies with the target service distance.}
	\label{fig:robustness}  %fig.9 new
	\vspace*{-1mm}
\end{figure}
To examine the distance dependence of the antenna-height trend, Fig.~\ref{fig:robustness} compares the coverage probability for $d_{\mathrm{lane}}=0.5$, 2.0, and 5.0~km. Among the tested heights, $h_{\mathrm{t}}\approx6$~m yields the largest coverage for $d_{\mathrm{lane}}=0.5$~km, whereas larger heights become more favorable at the longer target distances. This change results from the distance-dependent alignment of the direct and reflected paths with the dominant vessel region. The preferred height is therefore specific to the considered service distance, vessel distribution, and propagation geometry, while the proposed framework provides a means of evaluating this dependence for a given deployment.

\begin{figure}[!b]
	\vspace*{-3mm}
	\begin{center}
		\includegraphics[width=1\linewidth]{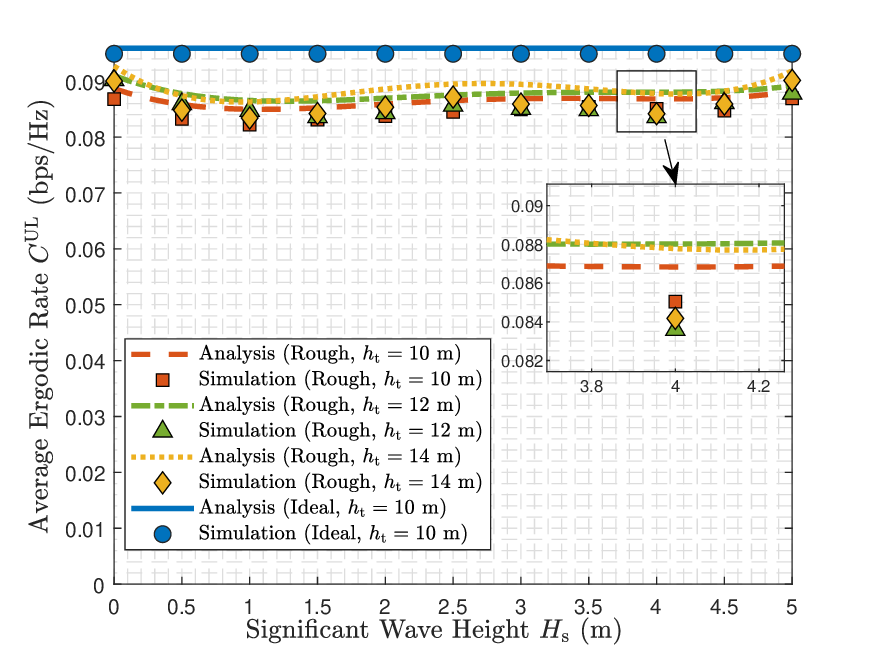}
	\end{center}
	\vspace*{-6mm}
	\caption{Average ergodic rate  as a function of significant wave height $H_{\text{s}}$ for different transmitter antenna heights $h_{\text{t}}$.}
	\label{fig:8} % Fig.9
	\vspace*{-1mm}
\end{figure}

\begin{figure}[!t]
	\vspace*{-2mm}
	\begin{minipage}{0.95\linewidth} %
		\centering
		  \setlength{\subfigcapskip}{-3mm}  
		\subfigure[Coverage probability comparison by regime] {	
			\centering
			\vspace*{-5mm}
			\includegraphics[width=1\linewidth]{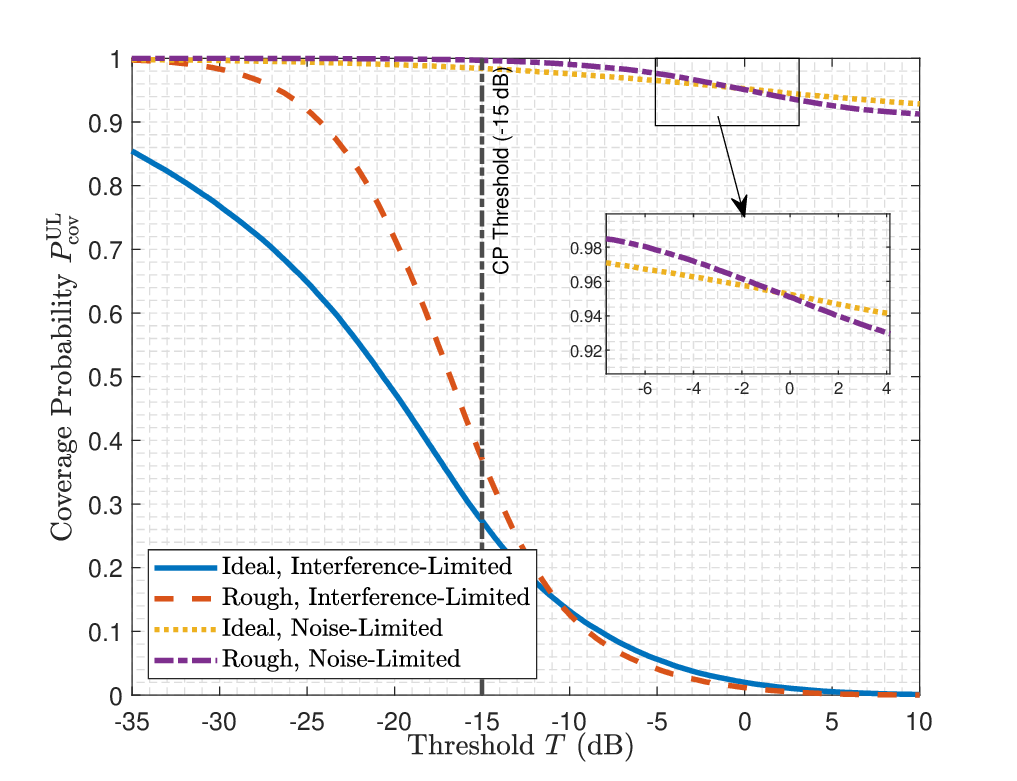}
			\label{Fig. 9a}
		}
			\vspace*{-2mm}
	\end{minipage}  
	\begin{minipage}{0.95\linewidth} %
		\centering
		  \setlength{\subfigcapskip}{-3mm}  
		\subfigure[Coverage probability and average ergodic rate comparison by regime]
		{	
			\centering
			\includegraphics[width=1\linewidth]{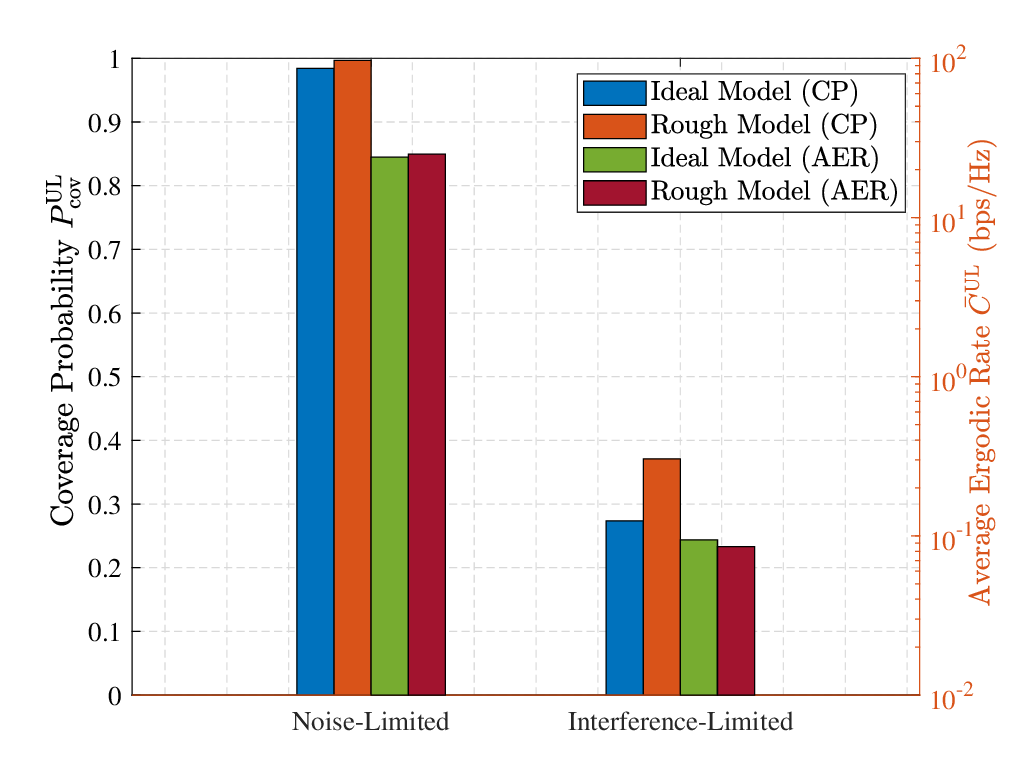}
			\label{Fig. 9b}
		}
		%	\vspace*{-2mm}
	\end{minipage}
	\vspace*{-2mm}
	\caption{Performance comparison in noise- and interference-limited regimes.}
	\label{fig:9} % Fig.10
	\vspace*{-5mm}
\end{figure}

For the tested antenna heights and vessel distribution, Fig.~\ref{fig:8} shows a larger average ergodic rate at higher $h_{\mathrm{t}}$. This ordering results from the corresponding grazing geometry and multipath alignment in the considered scenario and should not be interpreted as a general monotonic antenna-height rule.
Furthermore, Fig.~\ref{fig:8} illustrates the impact of sea surface roughness on the average ergodic rate by benchmarking the proposed model against the ideal smooth-surface model (Lee's model \cite{Lee20214}). As observed, the average ergodic rate of the proposed model remains consistently lower than the ideal baseline. For a fixed $h_{\mathrm{t}}$, the adopted model predicts a non-monotonic rate variation, with an initial decrease followed by a partial recovery as $H_{\mathrm{s}}$ increases. The portion of this behavior occurring above $H_{\mathrm{s}}=3$~m should be interpreted cautiously because the Rayleigh factor suppresses only the coherent specular reflection component, while the dominant diffuse field is represented only through a coarse statistical approximation.
	
Taken together, Figs.~\ref{fig:7} and~\ref{fig:8} reveal a conditional reliability--capacity trade-off under the adopted coherent-reflection attenuation model. Moderate attenuation of the coherent specular reflection component can mitigate destructive-interference nulls and improve low-threshold coverage, while the reduction in constructive reflected power can decrease the average ergodic rate. Under severe sea states, this interpretation remains subject to the limitation that the diffuse field is represented only approximately. As observed, while moderate roughness enhances the coverage probability (Fig.~\ref{fig:7}), it simultaneously reduces the average ergodic rate (Fig.~\ref{fig:8}). This phenomenon is attributed to the effect of channel hardening. Specifically, surface roughness attenuates the coherent specular reflection, thereby reducing the variance of the received signal power. This suppression mitigates the deep fading nulls caused by destructive interference, thus raising the minimum SINR floor and improving link reliability. Conversely, it attenuates the strong constructive interference peaks that contribute disproportionately to the ergodic capacity, leading to a reduction in the average ergodic rate. In the context of safety-critical maritime communications, where continuous connectivity is prioritized over peak throughput, this trade-off indicates that moderate roughness may improve link reliability under the considered propagation and traffic conditions, although at the expense of average rate.

%\begin{figure}[!t]
%	\vspace*{-1mm}
%	\begin{center}
%		\includegraphics[width=1\linewidth]{verify_AERandPcov.eps}
%	\end{center}
%	\vspace*{-5mm}
%	\caption{Comparative analysis of Coverage Probability (CP) and Average Ergodic Rate (AER) for the proposed roughness-aware model and the classical Lee model.}
%	\label{fig:9}
%	\vspace*{-6mm}
%\end{figure}

	Fig.~\ref{fig:9} shows that the relative performance of the two models depends strongly on the operational regime, reflecting a reliability-capacity trade-off. In the interference-limited regime, the proposed roughness-aware model improves coverage reliability, whereas Lee's ideal model yields a higher average ergodic rate. This behavior is consistent with the standard channel-hardening effect. The ideal model preserves strong constructive interference, resulting in a heavy-tailed SINR distribution in which infrequent high-SINR peaks disproportionately increase the logarithmic average rate. Conversely, the proposed model accounts for the suppression of the coherent specular reflection component by surface roughness, which reduces the variance of the received signal power. By mitigating both deep destructive nulls and high constructive peaks, the SINR distribution becomes more compressed. This hardening effect improves the probability of exceeding moderate SINR thresholds, albeit at the cost of reducing rare high-SINR events that contribute to peak capacity.

%A comparative analysis in Fig.~\ref{fig:9} reveals that the relative performance in terms of average ergodic rate is critically dependent on the operational regime. In the interference-limited regime, a performance paradox emerges: while the proposed model offers substantially higher reliability, the classical Lee's model yields a greater average ergodic rate. This is a direct consequence of the ideal model's heavy-tailed SINR distribution, an artifact of perfect constructive interference, where infrequent but high-SINR events disproportionately elevate the logarithmic average. Conversely, this average ergodic rate trend inverts under noise-limited conditions. Here, the proposed model results in a higher average ergodic rate because the severe signal nulls inherent to the Lee model become the dominant detrimental factor, and the proposed model's ability to mitigate these deep fades leads to its superior performance.

\section{Conclusion}\label{S7}

This paper has developed a tractable stochastic-geometry framework for near-shore maritime communications by integrating a roughness-dependent coherent-reflection coefficient with a non-homogeneous vessel distribution and aggregate uplink interference. Analytical expressions for the uplink coverage probability and average ergodic rate have been derived under the resulting piecewise maritime propagation model.

The results show that attenuation of the coherent specular reflection component can mitigate deep fading nulls and produce a conditional reliability--capacity trade-off. The 5.15~GHz measurements assess the roughness-sensitive reflection mechanism, whereas the 160~MHz measurements examine only the calm-sea VHF limit; the 162~MHz rough-sea results therefore remain model-based predictions. The cross-frequency ablation further confirms that a coherent-reflection coefficient evaluated at C-band cannot be transferred unchanged to VHF. Under severe sea states, particularly for $H_{\mathrm{s}}>3$~m, characterizing the unresolved diffuse field via Nakagami-$m$ fading remains a coarse statistical approximation. Future work will consider synchronized VHF measurements, explicit diffuse-scattering models, protocol-aware dependent access, vessel mobility, Doppler effects, and multi-station interference.
\appendix

\vspace*{-2mm}
\subsection{Proof of Theorem~\ref{T1}}\label{ApA}

\begin{proof}
\emph{1)}~The two-ray path loss model is derived from the coherent superposition of the electric fields from the direct LOS path and the sea-reflected path.
	
The total received electric field, $E_{\text{total}}$, is the vector sum of the LOS field, $E_{\text{LOS}}$, and the reflected field, $E_{\text{refl}}$:
\begin{equation}\label{eqApA1} % eq.27
	E_{\text{total}} = E_{\text{LOS}} + E_{\textrm{refl}}.
\end{equation}
Assuming an initial transmitted field of $E_0$, the fields arriving at the receiver can be expressed as:
\begin{align} % eqs.28,29
	E_{\text{LOS}} =& \frac{E_0}{d_1} \exp ({-jkd_1}) , \label{eqApA2} \\
	E_{\text{refl}} =& \rho_{\text{s}} \cdot \frac{E_0}{d_2} \exp ({-jkd_2}), \label{eqApA3}
\end{align}
where $d_1$ and $d_2$ are the path lengths of the direct and reflected rays, respectively, $k=2\pi/\lambda$ is the wavenumber, and $\rho_{\text{s}}$ is the effective reflection coefficient of the sea surface.
	
For far-field scenarios, where the horizontal distance $d$ is significantly greater than the antenna heights ($d \gg h_{\text{t}}, h_{\text{r}}$), we apply two standard approximations. First, for the amplitude attenuation, we approximate $d_1 \approx d_2 \approx d$. Second, for the phase calculation, the path length difference, $\Delta d$, is approximated using a Taylor expansion as:
\begin{equation} % eq.30
	\Delta d = d_2 - d_1 \approx \frac{2 h_{\text{t}} h_{\text{r}}}{d}. \label{eq:path_diff_2ray}
\end{equation}
The total field can thus be written as:
\begin{equation}\label{eqApA5} % eq.31
	E_{\text{total}} \approx \frac{E_0}{d} \left( \exp ({-jkd_1}) + \rho_{\text{s}} \exp ({-jkd_2}) \right).
\end{equation}
	
The received power, $P_{\textrm{r}}$, is proportional to the squared magnitude of the total electric field. By factoring out the common phase term $\exp ({-jkd_1})$, we get:
\begin{align}\label{eqApA6} % eq.32
	|E_{\text{total}}|^2 \approx& \left| \frac{E_0}{d} \exp ({-jkd_1}) \left( 1 + \rho_{\text{s}} \exp ({-jk\Delta d}) \right) \right|^2 \nonumber \\
	\overset{\mathrm{(a)}}{=}& \frac{|E_0|^2}{d^2} \left| 1 + \rho_{\text{s}} \exp ({-jk\Delta d}) \right|^2,
\end{align}
where (a) is due to $|\exp ({-jkd_1})|=1$.
Since $\rho_{\text{s}}$ is a negative real number, the squared magnitude term expands to:

\begin{align}\label{eqApA7} % eq.33
	& \left| 1 + \rho_{\text{s}} \exp ({-jk\Delta d}) \right|^2 \nonumber \\
	& \hspace*{5mm}= \left| 1 - |\rho_{\text{s}}|(\cos(k\Delta d) - j\sin(k\Delta d)) \right|^2 \nonumber \\
	& \hspace*{5mm}= 1 + \rho_{\text{s}}^2 - 2|\rho_{\text{s}}|\cos(k\Delta d).
\end{align}
	
The path loss is the ratio of transmitted power to received power. For isotropic antennas in free space, $P\!L_{\text{fs}}\! =\! (4\pi d / \lambda)^2$. The total path loss is the free-space path loss divided by the multipath power gain factor, $G_{\text{mp}}^{\textrm{2-ray}}\! =\! |1 + \rho_{\text{s}} \exp ({-jk\Delta d})|^2$. Using \eqref{eq:path_diff_2ray} and $k=2\pi/\lambda$ yields the final expression of $P\!L_{\textrm{2-ray}}^{\textrm{rough}}(d)$ given in (\ref{eq3-4}). 

\emph{2)}~The three-ray model extends the two-ray model by including a third path component, $E_{\text{duct}}$, which results from refraction by the evaporation duct. This path is approximated as a reflection from an effective duct height $h_{\text{e}}$.
The total received electric field is the sum of three components:
\begin{equation}\label{eqApA8} % eq.34
	E_{\text{total}} = E_{\text{LOS}} + E_{\text{refl}} + E_{\text{duct}}.
\end{equation}
Using the far-field amplitude approximation, this becomes:
\begin{align}\label{eqApA9} % eq.35
	E_{\text{total}} \approx& \frac{E_0}{d} \big( \exp ({-jkd_1}) + \rho_{\text{s}} \exp ({-jkd_2}) \nonumber \\
	& + \Gamma_{\text{duct}} \exp ({-jkd_3}) \big),
\end{align}
where $d_3$ is the path length of the duct-refracted ray, and $\Gamma_{\text{duct}}$ is its reflection coefficient, assumed to be ideal ($\Gamma_{\text{duct}} = -1$) as per the model in \cite{Lee20214}.
	
The path length difference for the sea-reflected ray, $\Delta d_2\! =\! d_2 - d_1$, is the same as in \eqref{eq:path_diff_2ray}. The path length difference for the duct-refracted ray, $\Delta d_3$, is derived from the geometry of a reflection from height $h_{\text{e}}$:
\begin{equation}\label{eqApA10} % eq.36
	\Delta d_3 = d_3 - d_1 \approx \frac{2(h_{\text{e}} - h_{\text{t}})(h_{\text{e}} - h_{\text{r}})}{d}.
\end{equation}
The multipath power gain factor, $G_{\textrm{mp}}^{\textrm{3-ray}}$, is the squared magnitude of the term in the parentheses:
\begin{align}\label{eqApA11} % eq.37
	G_{\textrm{mp}}^{\textrm{3-ray}} &= \left| 1 + \rho_{\text{s}} \exp ({-jk\Delta d_2}) - \exp ({-jk\Delta d_3}) \right|^2 \nonumber \\
	&= (1 + \rho_{\text{s}} C_2 - C_3)^2 + (\rho_{\text{s}} S_2 - S_3)^2
\end{align}
where $C_2=\cos(k\Delta d_2)$, $S_2=\sin(k\Delta d_2)$, $C_3=\cos(k\Delta d_3)$,  and $S_3=\sin(k\Delta d_3)$. The full expansion provides the complete gain factor. The final path loss is given by:
\begin{equation}\label{eqApA12} % eq.38
	P\!L_{\textrm{3-ray}}^{\textrm{rough}}(d) = \frac{P\!L_{\text{fs}}}{G_{\textrm{mp}}^{\textrm{3-ray}}}.
\end{equation}
Noting $P\!L_{\text{fs}}\! =\! (4\pi d / \lambda)^2$ completes the proof.
\end{proof}

\vspace*{-2mm}
\subsection{Proof of Lemma~\ref{L1}}\label{ApB}

\begin{proof}
\begin{align} \label{eqApB1} % eq.39
	\mathcal{L}_{I, \textrm{near}} (s) 	& = \left. \mathbb{E}\! \left[ \exp\!\left( -s \sum_{S_i \in\Phi_{\textrm{S}}' } \frac{P_{S_i} \left|h_{i}\right|^2}{P\!L(D_i)}\right) \! \right]\right|_{D_i=d_i} \nonumber \\
	& \hspace*{-5mm}= \mathbb{E}_{\Phi_{\textrm{S}}'}\!\Bigg[\! \prod_{S_i \in\Phi_{\textrm{S}}' }\!\! \mathbb{E}_{|h_{i}|^2}\bigg[\exp\bigg(\left|h_{i}\right|^2 \bigg(-s \frac{P_{S_i} }{P\!L_{\textrm{2-ray}}(d_i)}\bigg)\! \bigg]\Bigg] \nonumber\\
	& \hspace*{-5mm}\overset{\mathrm{(a)}}{=} \mathbb{E}_{\Phi_{\textrm{S}}'}\!\Bigg[\! \prod_{S_i \in\Phi_{\textrm{S}}' } {\left(\!1\!+\!\frac{s P_{S_i}}{m \, P\!L_{\textrm{2-ray}}(d_i) }\!\right)^{\!\!\!-m}} \!\Bigg]\nonumber\\
	& \hspace*{-5mm}\overset{\mathrm{(b)}}{=} \exp \left( \int_S \left[{\left(\!1\!+\!\frac{s P_{S_i}}{m \, P\!L_{\textrm{2-ray}}(d_i) }\!\right)^{\!\!\!-m}}-1\right]\mu'(d_i) \,\mathrm{d}\mathbf{x} \right) \nonumber \\
	& \hspace*{-5mm}\overset{\mathrm{(c)}}{=} \exp \left(\pi \int_0^{d_{\textrm{break}}} \left[ {\left(\!1\!+\!\frac{s P_{S_i}}{m \, P\!L_{\textrm{2-ray}}(d_i) }\!\right)^{\!\!\!-m}}-1 \right] \right. \nonumber \\
	& \times p_a \mu_0 d_i^{\alpha} \exp ({-\beta d_i}) \,\mathrm{d}d_i \Bigg) ,
\end{align}  
where (a) is obtained by using the moment-generating function (MGF) of the normalized Gamma random variable, (b) is given by the probability generating functional (PGFL) of the PPP, and (c) is obtained by replacing $\mathrm{d}\mathbf{x}$ with $\pi d_i \,\mathrm{d}d_i$.
	
Similarly, we can obtain the Laplace transform of the interference from vessels in the range $d_{\textrm{break}}$ to $d_{\textrm{LOS}}$:
\begin{align} \label{eqApB2} % eq.40
	\mathcal{L}_{I, \mathrm{far}}(s_1)	&= \exp \left( \pi \int_{d_{\textrm{break}}}^{d_{\textrm{LOS}}} \left[ {\left(\!1\!+\!\frac{s P_{S_i}}{m \, P\!L_{\textrm{3-ray}}(d_i) }\!\right)^{\!\!\!-m}}-1 \right] \right.\nonumber
\end{align}
\begin{align}
	&\,\,\,\,\times p_a \mu_0  d_i^{\alpha} \exp ({-\beta d_i}) \,\mathrm{d}d_i \Bigg)
\end{align}
This completes the proof.
\end{proof}

\subsection{Proof of Theorem~\ref{The2}}\label{ApC}

\begin{proof}
\begin{align}\label{eqApC1} % eq.41
	P_{\mathrm{cov}}^{\textrm{UL}} \triangleq& \,\, \mathbb{P}( \mathrm{SINR}_{S_m}\geq T) \nonumber \\
	=& \, \mathbb{E}\left[ \mathbb{P}\left(\mathrm{SINR}_{S_m} \geq T|D_{S_m}=d\right) \right] \nonumber 
\end{align}
\begin{align}
	= & \int_0^{d_{\mathrm{break}}} \mathbb{P}\left(\mathrm{SINR}_{S_m}^{\textrm{2-ray}} \geq T|d\right)  g(d) \,\mathrm{d}d \nonumber \\
	+ & \int_{d_{\mathrm{break}}}^{d_{\mathrm{LOS}}} \mathbb{P}\left(\mathrm{SINR}_{S_m}^{\textrm{3-ray}} \geq T|d\right)  g(d) \,\mathrm{d}d .
\end{align}

First we note:
\begin{align*}
	& \mathbb{P}\left(\mathrm{SINR}_{S_m}^{\textrm{2-ray}} \geq T|d\right) \nonumber \\
	& \hspace*{5mm} = 1 - \mathbb{P}\left(|h_{S_m}|^2 \le \frac{T\cdot (I_{S_m}\!\!+N_0) \cdot P\!L_{\textrm{2-ray}}(d)}{	P_{S_m}} \right)
\end{align*}
\begin{align}\label{eqApC2} % eq.42
	& \overset{\mathrm{(a)}}{\approx}  1 - \mathbb{E}\left[\left(1 - \exp\left(\frac{-\eta T\cdot (I_{S_m}\!\!+N_0) \cdot P\!L_{\textrm{2-ray}}(d)}{	P_{S_m}} \right) \right)^{m}\right]  \nonumber \\
	& \overset{\mathrm{(b)}}{=} \!\mathbb{E}_I \!\!\left[\! \sum_{n=1}^{m}(-1)^{n+1}\!\binom{m}{n} \!\exp \! \left(\frac{-n \eta T (I_{S_m}\!\!+N_0)  P\!L_{\textrm{2-ray}}(d)}{	P_{S_m}} \right)\!\!\right] \nonumber 
\end{align}
\begin{align}
	& \overset{\mathrm{(c)}}{=} \sum_{n=1}^{m}(-1)^{n+1}\binom{m}{n} \mathbb{E}_I\left[ \exp\left(-s{I_{S_m}}\right)\right]\,\, \exp\left(-s{N_0} \right)\nonumber\\
	& = \sum_{n=1}^{m}(-1)^{n+1}\binom{m}{n}\mathcal{L}_{I, \mathrm{near}}\! \left(s\right) \,\,\exp\left(-s{N_0} \right),
\end{align}
where (a) is a tight upper bound when $m$ is small \cite{alzer1997some}, that is, for small $m$, $\mathbb{P}\left(|h|^{2}<\psi\right)<\mathbb{E}\left[\left(1 - \exp (-\psi\eta)\right)^{m}\right]$ with $\eta=m(m!)^{-\frac{1}{m}}$, and (b) is obtained by the binomial theorem, while (c) is obtained by denoting $s= \frac{-n \,\eta \,T\,  P\!L_{\textrm{2-ray}}(d)   }  {P_{S_m}} $.
	
Similarly, $\mathbb{P}\left(\mathrm{SINR}_{S_m}^{\textrm{3-ray}} \geq T|d\right)$ can be expressed as:
\begin{align}\label{eqApC3} % eq.43
	& \mathbb{P}\left(\mathrm{SINR}_{S_m}^{\textrm{3-ray}} \geq T|d\right) \nonumber\\
	& \hspace*{5mm}=\sum_{n=1}^{m}(-1)^{n+1} \binom{m}{n} \mathcal{L}_{I, \mathrm{far}} \left(s_1 \right) \exp  \left(-s_1 {N_0} \right),
\end{align}
where $s_1= \frac{-n \,\eta \,T\,P\!L_{\textrm{3-ray}}(d)}{P_{S_m}}$.

We obtain $P_{\textrm{cov}}^{\textrm{UL}}$ of (\ref{eqThe21}) by substituting (\ref{eqgd}) (\ref{eqL1near}), (\ref{eqL1far}), (\ref{eqApC2}) and (\ref{eqApC3}) into (\ref{eqApC1}). This completes the proof.
\end{proof}

%--------------------------------------------
%\iffalse
%\begin{align}\label{eqThe21}
%	&P_{\mathrm{cov}}^{\mathrm{UL}}  \nonumber \\
%	&= \int_0^{d_{\rm{break}}} \,\sum_{n=1}^{m}(-1)^{n+1}\binom{m}{n} \nonumber\\
%	&\,\,\,\,\times\exp \left(2\pi \int_0^{d_{\textrm{break}}} \left[ {\left( \!1\!+\!\frac{s P_{S_i} 4\lambda^2 \sin^2\left(\frac{2\pi h_{\textrm{t}} h_{\textrm{r}}}{\lambda d_i}\right)}{(4\pi d_i)^2 m }\!\right)^{\!\!\!-m}}\!\!\! -1 \right] \right.\nonumber\\
%	&\,\,\,\,\times p_a \mu_0 f(d_i) d_i \,\mathrm{d}d_i \Bigg)   \,\,\exp\left(-s{N_0} \right)
%	g(d) \,\mathrm{d}d \nonumber \\
%	&\,\,\,\,+ \int_{d_{\mathrm{break}}}^{d_{\rm{LOS}}}  	\sum_{n=1}^{m}(-1)^{n+1} \binom{m}{n} \nonumber\\
%	&\,\,\,\,\times\exp \left(2\pi \int_{d_{\textrm{break}}}^{d_{\textrm{LOS}}} \left[ {\left(\!1\!+\!\frac{s_1 P_{S_i} 4\lambda^2 \left(1+\Delta(d_i)\right)^2}{ (4\pi d_i)^2  m  }\!\right)^{\!\!\!-m}}\!\!-1 \right] \right.\nonumber\\
%	&\,\,\,\,\times p_a \mu_0 f(d_i) d_i \,\mathrm{d}d_i \Bigg)  \exp  \left( -s_1 {N_0}  \right)  g(d) \,\mathrm{d}d,
%\end{align}%%
%\fi	

\subsection{Proof of Theorem~\ref{T3}}\label{ApD}

\begin{proof}
From (\ref{eqAER}), we have
\begin{align}%\label{eqApD1} % eq.44
	\bar{C}^{\textrm{UL}} 	&\triangleq \frac{1}{K}\mathbb{E}\left[\log_2(1 + \mathrm{SINR}_{S_m})\right] \nonumber \\
	& \hspace*{-2mm}\overset{\mathrm{(a)}}=\! \frac{1}{K}\int \!\!\!\! \int_{t>0}\!\!\!\! \mathbb{E}\left[\mathbb{P}\!\!\ \Big(\log_2 (1+ \mathrm{SINR}_{S_m} >t|d\Big) \right]\!  g(d)\,\mathrm{d}t\,\mathrm{d}d \nonumber\\
	& \hspace*{-2mm}= \frac{1}{K}\! \int_{t>0}\! \left(\int_0^{d_{\rm{break}}}\!\! \mathbb{P}\!\!\ \Big(\log_2 (1\! +\! \mathrm{SINR}_{S_m}^{\textrm{2-ray}}) >t|d\Big)   g(d) \,\mathrm{d}d \right.\nonumber 
\end{align}
\begin{align}\label{eqApD1} % eq.44
		& \hspace*{-2mm}+ \left. \int_{d_{\mathrm{break}}}^{d_{\rm{LOS}}}  \mathbb{P}\!\!\ \Big(\log_2 (1+ \mathrm{SINR}_{S_m}^{\textrm{3-ray}}) >t|d\Big) g(d) \,\mathrm{d}d \right)\mathrm{d}t \nonumber\\
	& \hspace*{-2mm}= \frac{1}{K}\int_{t>0}\left(\int_0^{d_{\rm{break}}} \mathbb{P}\!\!\ \Big( \mathrm{SINR}_{S_m}^{\textrm{2-ray}} >2^t-1|d\Big)   g(d) \,\mathrm{d}d \right.\nonumber \\
	& \hspace*{-2mm}+ \left. \int_{d_{\mathrm{break}}}^{d_{\rm{LOS}}}  \mathbb{P}\!\!\ \Big( \mathrm{SINR}_{S_m}^{\textrm{3-ray}} >2^t-1|d\Big) g(d) \,\mathrm{d}d \right)\mathrm{d}t ,
\end{align}	
where (a) follows from the identity $\mathbb{E}[X]=\int_{t>0}\mathbb{P}(X>t)\mathrm{d}t$ for a nonnegative random variable $X$.

First we investigate $\mathbb{P}\left(\mathrm{SINR}_{S_m}^{\textrm{2-ray}} \geq 2^t-1|d\right)$:
\begin{align*}
	& \mathbb{P}\left(\mathrm{SINR}_{S_m}^{\textrm{2-ray}} \geq 2^t-1|d\right) \nonumber \\
	& \! = 1 - \mathbb{P}\left(|h_{S_m}|^2 \le \frac{(2^t-1)\cdot (I_{S_m}\!\!+N_0) \cdot P\!L_{\textrm{2-ray}}(d)}{	P_{S_m}} \right)\nonumber \\
	&\! \overset{\mathrm{(a)}}{\approx}  1 - \mathbb{E}\left[\left(1 - \exp\!\left(\frac{-\eta (2^t-1) (I_{S_m}\!\!+N_0) P\!L_{\textrm{2-ray}}(d)}  {P_{S_m} } \right)\!\! \right)^{\!m}\right]  \nonumber \\
	&\! \overset{\mathrm{(b)}}{=} \!\!\mathbb{E}_{\!I} \!\!\left[\! \sum_{n=1}^{m}(-1)^{n\!+\!1}\!\binom{m}{n}\!\! \exp\!\! \left(\! \!\frac{-n \eta (2^t\! -\! 1)\! (I_{S_m}\!\!+\! N_0)\!P\!L_{\textrm{2-ray}}\!(d)}  {P_{S_m}\, }\!\!\right)\!\!\right]\\
		&\! \overset{\mathrm{(c)}}{=} \sum_{n=1}^{m}(-1)^{n+1}\binom{m}{n} \mathbb{E}_I\left[ \exp\left(-s'{I_{S_m}}\right)\right]\,\, \exp\left(-s'{N_0} \right)\nonumber\\
		&	\! = \sum_{n=1}^{m}(-1)^{n+1}\binom{m}{n}\mathcal{L}_{I, \mathrm{near}}\! \left(s'\right) \,\,\exp\left(-s'{N_0} \right)\nonumber
\end{align*}
\begin{align}\label{eqApD2} % eq.4
	&	\!=  \,\sum_{n=1}^{m}(-1)^{n+1}\binom{m}{n} \exp\left(-s'{N_0} \right)\nonumber\\
	&\,\,\,\,\times\exp \left(2\pi \int_0^{d_{\textrm{break}}} \left[ {\left( \!1\!+\!\frac{s' P_{S_i} }{ m P\!L_{\textrm{2-ray}}\!(d_i) }\!\right)^{\!\!\!-m}}\!\!\! -1 \right] \right.\nonumber\\
	&\,\,\,\,\times p_a \mu_0 d_i^{\alpha} \exp ({-\beta d_i}) \,\mathrm{d}d_i \Bigg)   \,\,,
\end{align}
where (a) is a tight upper bound when $m$ is small \cite{alzer1997some}, that is, for small $m$, $\mathbb{P}\left(|h|^{2}<\psi\right)<\mathbb{E}\left[\left(1 - \exp (-\psi\eta)\right)^{m}\right]$ with $\eta=m(m!)^{-\frac{1}{m}}$, and (b) is obtained by the binomial theorem, while (c) is obtained by denoting $s'= \frac{-n \,\eta \,(2^t-1)\, P\!L_{\textrm{2-ray}}\!(d)    }  {P_{S_m}\ } $.

Similarly, $\mathbb{P}\left(\mathrm{SINR}_{S_m}^{\textrm{3-ray}} \geq 2^t-1|d\right)$ can be expressed as:
\begin{align}\label{eqApD3} % eq.46
	&\mathbb{P}\left(\mathrm{SINR}_{S_m}^{\textrm{3-ray}} \geq 2^t-1|d\right) \nonumber\\
	&=
	\sum_{n=1}^{m}(-1)^{n+1} \binom{m}{n} \mathcal{L}_{I, \mathrm{far}} \left(s_1' \right) \exp  \left( -s_1' {N_0}  \right)\nonumber\\
	&= 	\sum_{n=1}^{m}(-1)^{n+1} \binom{m}{n} \exp  \left( -s_1' {N_0}  \right)\nonumber\\
	&\,\,\,\,\times\exp \left(2\pi \int_{d_{\textrm{break}}}^{d_{\textrm{LOS}}} \left[ {\left(\!1\!+\!\frac{s_1' P_{S_i} 4\lambda^2 \left(1+\Delta(d_i)\right)^2}{ (4\pi d_i)^2  m  }\!\right)^{\!\!\!-m}}\!\!-1 \right] \right.\nonumber\\
	&\,\,\,\,\times p_a \mu_0 d_i^{\alpha} \exp ({-\beta d_i}) \,\mathrm{d}d_i \Bigg),
\end{align}
where $s_1'= {-n \,\eta \,(2^t-1)\,P\!L_{\textrm{3-ray}}(d_i)} / {P_{S_m}}$. 

We obtain $\bar{C}^{\textrm{UL}}$ of (\ref{eqT4}) by substituting (\ref{eqgd}) (\ref{eqL1near}), (\ref{eqL1far}), (\ref{eqApD2}) and (\ref{eqApD3}) into (\ref{eqApD1}). This completes the proof.
\end{proof}	

%\bibliography{IEEEtran}
%\bibliography{reference}	
\small
%\bibliography{IEEEabrv, reference}
%\bibliography{reference}		
% Generated by IEEEtran.bst, version: 1.14 (2015/08/26)

\makeatletter
\def\@IEEEBIOskipN{0.5\baselineskip}
\makeatother
\begin{IEEEbiography}
	[{\includegraphics[width=1in,height=1.25in,clip,keepaspectratio]{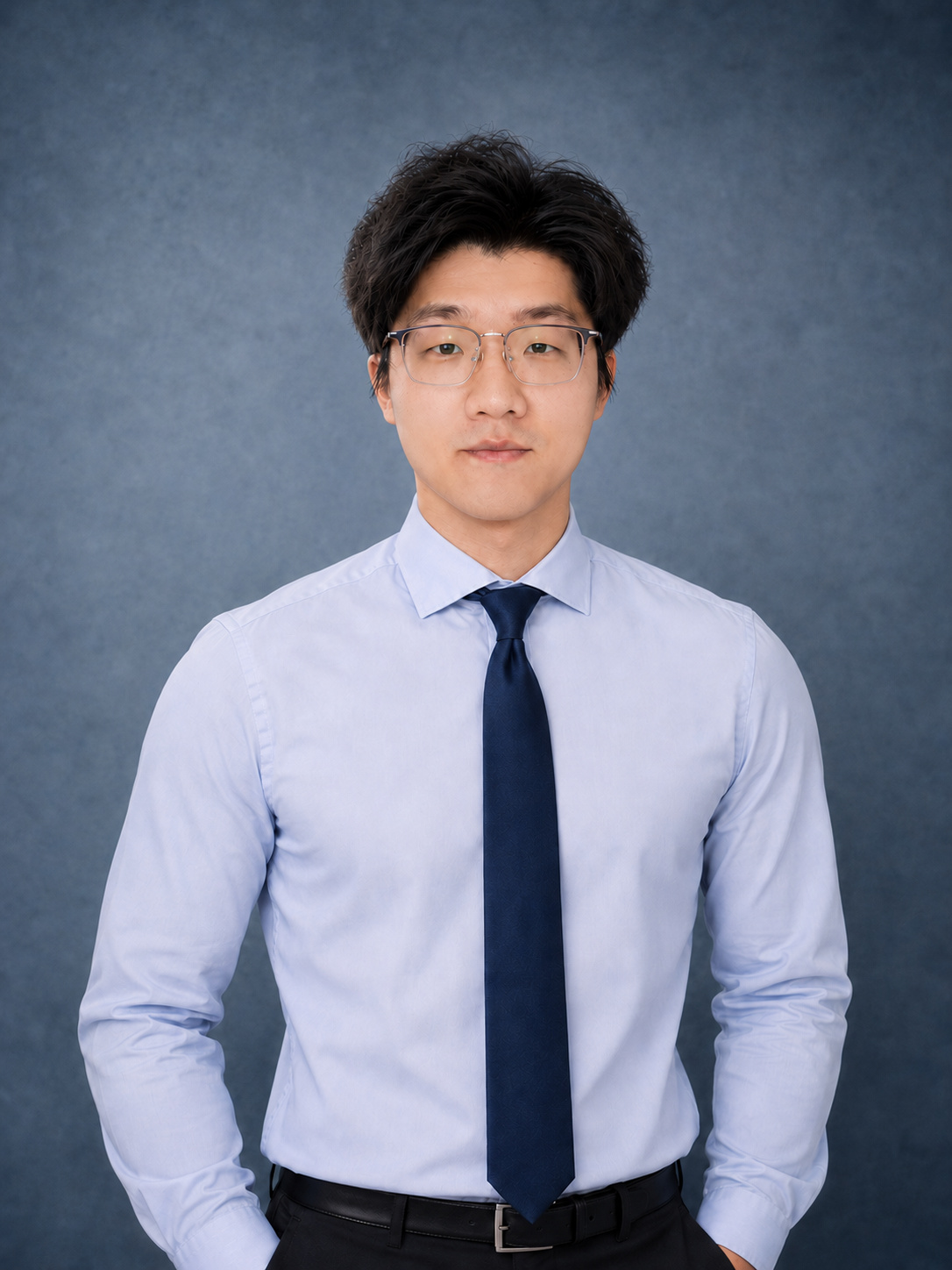}}]
	{Wen-Yu Dong}
	received the B.S. degree in electronic and information engineering from Sichuan University, Chengdu, China, in 2019, and the Ph.D. degree in information and communication engineering from the School of Information and Communication Engineering, Beijing University of Posts and Telecommunications, Beijing, China, in 2025. He is currently a Researcher with the Future Technology Research Center, China Telecom Research Institute, Beijing, China. His current research interests include space-air-ground integrated networks, fluid-spatiotemporal stochastic geometry (F-STSG), and wireless communications.
\end{IEEEbiography}

\begin{IEEEbiography}
	[{\includegraphics[width=1in,height=1.25in,clip,keepaspectratio]{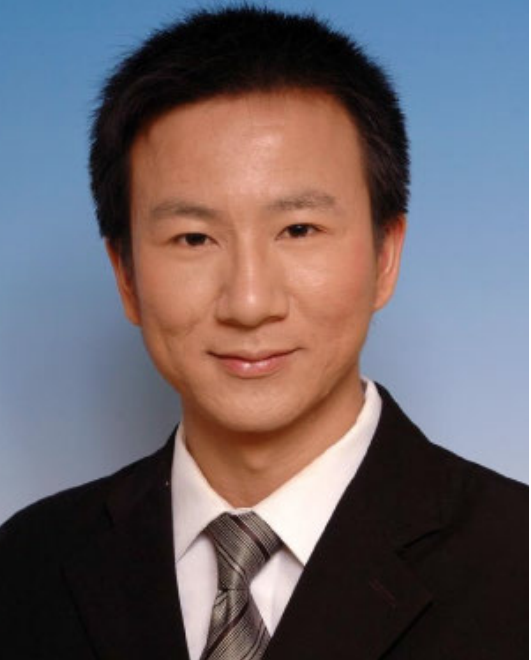}}]
	{Shaoshi Yang}
	(Senior Member, IEEE) received the B.Eng. degree in information engineering from Beijing University of Posts and Telecommunications (BUPT), China, in 2006, and the Ph.D. degree in electronics and electrical engineering from the University of Southampton, U.K., in 2013. From 2008 to 2009, he was a Researcher with Intel Labs China. From 2013 to 2016, he was a Research Fellow with the School of Electronics and Computer Science, University of Southampton. From 2016 to 2018, he was a Principal Engineer with Huawei Technologies Co. Ltd., where he made significant contributions to the products, solutions, and standardization of 5G, wideband IoT, and cloud gaming/VR. He was a Guest Researcher with the Isaac Newton Institute for Mathematical Sciences, University of Cambridge. He is currently a Full Professor with BUPT. His research interests include 5G/5G-A/6G, massive MIMO, mobile ad hoc networks, distributed artificial intelligence, and cloud gaming/VR. He is a Deputy Director of the Key Laboratory of Mathematics and Information Networks, Ministry of Education, and a Standing Committee Member of the CCF Technical Committee on Distributed Computing and Systems. He received the Dean’s Award for Early Career Research Excellence from the University of Southampton in 2015, the Huawei President Award for Wireless Innovations in 2018, the IEEE TCGCC Best Journal Paper Award in 2019, the IEEE Communications Society Best Survey Paper Award in 2020, the Xiaomi Young Scholars Award in 2023, the CAI Invention and Entrepreneurship Award in 2023, the CIUR Industry-University-Research Cooperation and Innovation Award in 2023, and the First Prize of Beijing Municipal Science and Technology Advancement Award in 2023. He is an Editor of \emph{IEEE Transactions on Communications}, \emph{IEEE Transactions on Vehicular Technology}, and \emph{Signal Processing} (Elsevier), as well as a member of the Editorial Advisory Board of Advanced Computing (Wiley). He was also an Editor of \emph{IEEE Systems Journal} and \emph{IEEE Wireless Communications Letters}. For more details on his research progress, please refer to https://shaoshiyang.weebly.com/.
\end{IEEEbiography}

\begin{IEEEbiography}[{\includegraphics[width=1in,height=1.25in,clip,keepaspectratio]{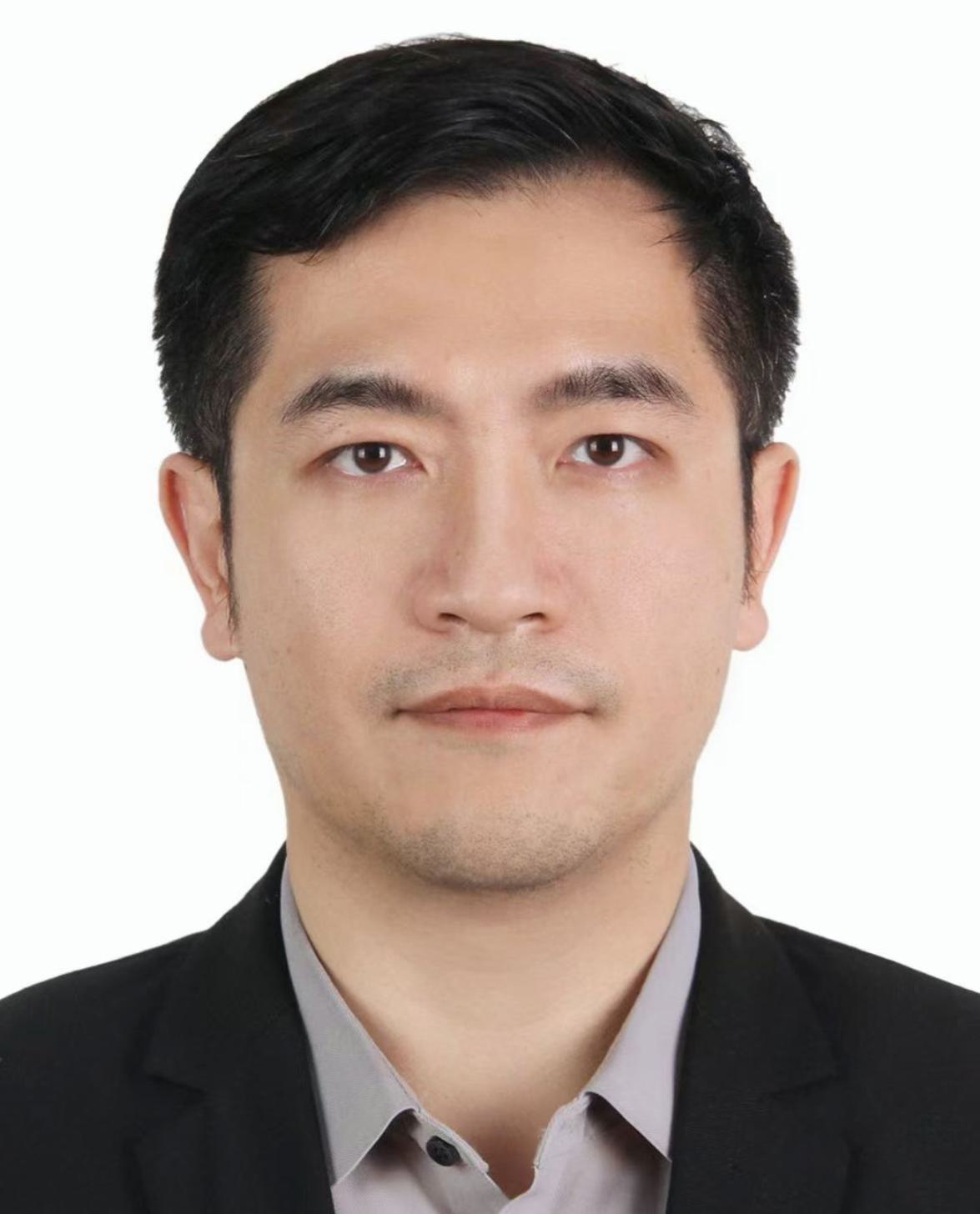}}]{Song Zhao}
	is a senior engineer with the Future Technology Center, China Telecom Research Institute, Beijing, China. He received his Ph.D. degree in Telecommunications and Information Systems from Beijing University of Posts and Telecommunications, Beijing, China. He serves as a delegate of China Telecom in the 3GPP SA2 and SA5 working groups, and as rapporteur for multiple 3GPP work items related to network automation and intelligence. His research interests include wireless channel modeling, radio resource management, proximity services, and network AI.
\end{IEEEbiography}

\begin{IEEEbiography}[{\includegraphics[width=1in,height=1.25in,clip,keepaspectratio]{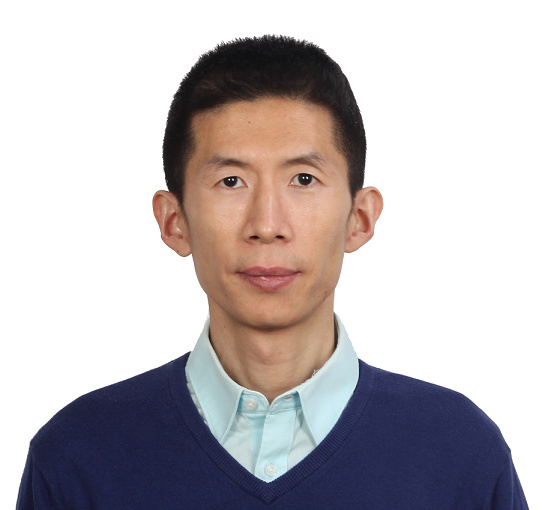}}]{Jinyang Yu}
 received the master’s degree in Communications and Information Systems from Nanjing University of Posts and Telecommunications, Nanjing, China, in 2008. He is currently a Research Director with the China Telecom Research Institute, Beijing, China. His current research interests include space-air-ground integrated networks, ISAC, and mobile AI.
\end{IEEEbiography}

\begin{IEEEbiography}
	[{\includegraphics[width=1in,height=1.25in,clip,keepaspectratio]{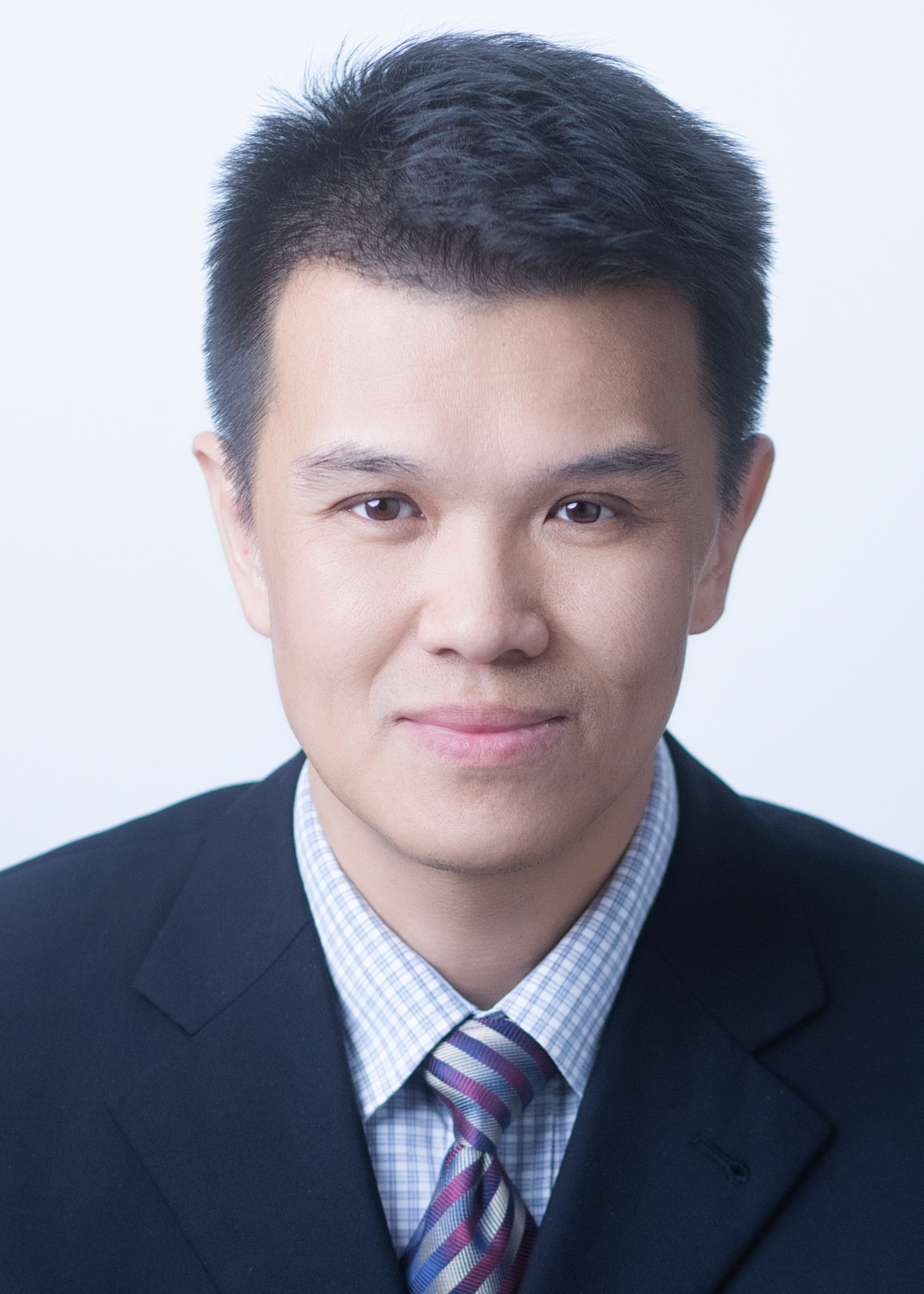}}]
	{Weiliang Xie}
 received the Ph.D. degree in information science and technology from Peking University, China. He is currently a Senior Engineer and Deputy Director of the Mobile Communication Technology Research Department, China Telecom Research Institute. His current research interests include mobile networks and wireless communication systems. He was the principal investigator of the National Science and Technology Major Project of China.
\end{IEEEbiography}

\begin{IEEEbiography}
	[{\includegraphics[width=1in,height=1.25in,clip,keepaspectratio]{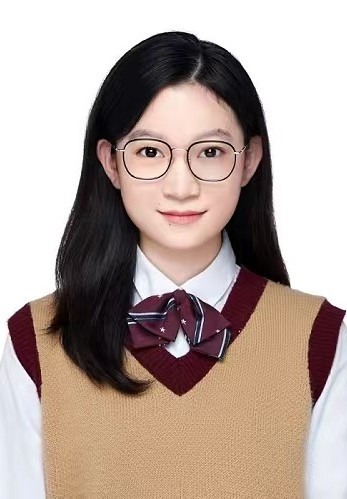}}]
	{Rui-Si Han}
	received her B.S. degree in E-Commerce and Law from Beijing University of Posts and Telecommunications, Beijing, China, in 2021, and her M.Eng. degree in Information and Communication Engineering from the same university in 2024. She is currently a researcher at the China Telecom Cloud Network Operating System R\&D Center, Beijing, China. Her research interests include mobile ad hoc networks and research project management.
\end{IEEEbiography}

\begin{IEEEbiography}[{\includegraphics[width=1in,height=1.25in,clip,keepaspectratio]{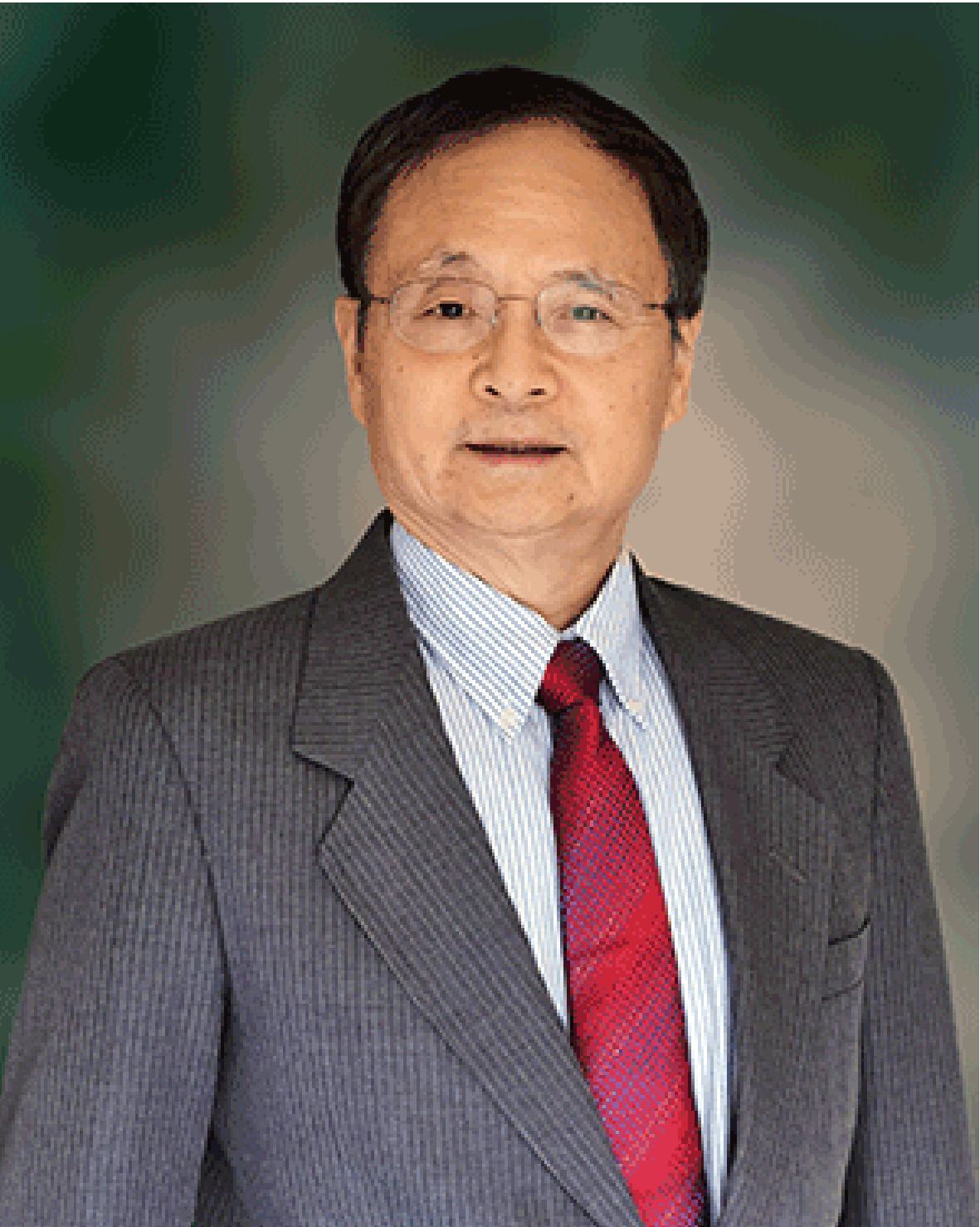}}]{Qi Bi}
	(Fellow, IEEE) is Chief Scientist of China Telecom and CTO of its Research Institute, specializing in 5G and 6G. He received his M.S. from Shanghai Jiao Tong University and Ph.D. from Pennsylvania State University. He was awarded Bell Labs Fellow in 2002, the Bell Labs President’s Gold Awards in 2000 and 2002, the Asian American Engineer of the Year in 2005, the Beijing Outstanding Contribution Award for Innovation and Entrepreneurship for Overseas Scholars in 2019, and three Patent Silver Prizes from the China National Intellectual Property Administration from 2022 to 2024.
\end{IEEEbiography}

\begin{IEEEbiography}[{\includegraphics[width=0.9in,height=1.1in,clip,keepaspectratio]{ShengChen2014.jpg}}]{SHENG CHEN} (IEEE Life Fellow) is an Emeritus Professor, University of Southampton, Southampton, U.K.
	
	He received the B.Eng. degree in control engineering from the East China Petroleum Institute, Dongying, China, in January 1982, the Ph.D. degree in control engineering from City University, London, in September 1986, and the higher doctoral (D.Sc.) degree from the University of Southampton, Southampton, U.K., in August 2005. From 1986 to 1999, he held research and academic appointments with the University of Sheffield, the University of Edinburgh, and the University of Portsmouth, all in U.K. In September 1999, he joined the School of Electronics and Computer Science, University of Southampton, and retired in June 2026. Since then, he is a University Visitor associated with the School of Electronics and Computer Science, University of Southampton. In 2023, Dr. Chen was appointed  a Distinguished Professor at Ocean University of China.
	
	Professor Chen has established an over 40-year distinguished academic career in the fields of adaptive signal processing, wireless communications, neural networks and machine learning. He has authored/coauthored over 750 research papers. He has 23,000+ Web of Science citations with h-index 67, and 44,000+ Google Scholar citations with h-index 88. Professor Chen is IEEE ComSoc Signal Processing and Computing for Communications (SPCC) 2025 Technical Recognition Award recipient.
	
	Professor Chen was elected as a Fellow of the United Kingdom Royal Academy of Engineering in 2014. He is a fellow of Asia-Pacific Artificial Intelligence Association, a fellow of Industry Academy of the International Artificial Intelligence Industry Alliance, a fellow of IET, and one of the original 200 ISI Highly Cited Researchers in engineering (March 2004).
\end{IEEEbiography}

\end{document}